\def\BState{\State\hskip-\ALG@thistlm}
\newcommand{\F}{\mathbb{F}}
\newtheorem{theorem}{Theorem}
\newtheorem{lemma}{Lemma}
\newtheorem{remark}{Remark}
\newtheorem*{note*}{Note}
\begin{document}
\title{Tensor Reed-Muller Codes: Achieving Capacity with Quasilinear Decoding Time}
\author{Emmanuel Abbe$^*$ \and Colin Sandon$^*$\and Oscar Sprumont$^{*\dagger}$}
\date{%
    $^*$EPFL\\%
    $^\dagger $University of Washington\\[2ex]%
}
\maketitle

\begin{abstract}
   Define the codewords of the Tensor Reed-Muller code $\mathsf{TRM}(r_1,m_1;r_2,m_2;\dots;r_t,m_t)$ to be the evaluation vectors of all multivariate  polynomials in 
   the variables 
$\left\{x_{ij}\right\}_{i=1,\dots,t}^{j=1,\dots m_i}$ 
with degree at most $r_i$ in the variables $x_{i1},x_{i2},\dots,x_{im_i}$. 
 The generator matrix of $\mathsf{TRM}(r_1,m_1;\dots;r_t,m_t)$ is thus the tensor product of the generator matrices of the Reed-Muller codes $\mathsf{RM}(r_1,m_1),\dots, \mathsf{RM}(r_t,m_t)$.
 
We show that for any constant rate $R$ below capacity, one can construct a Tensor Reed-Muller code $\mathsf{TRM}(r_1,m_1;\dotsc;r_t,m_t)$ of rate $R$ that is decodable in quasilinear time. For any blocklength $n$, we provide two constructions of such codes:
\begin{itemize}
    \item Our first construction (with $t=3$) has error probability $n^{-\omega(\log n)}$
    and decoding time $O(n\log\log n)$. 
    \item Our second construction, for any $t\geq 4$, has error probability $2^{-n^{\frac{1}{2}-\frac{1}{2(t-2)}-o(1)}}$ and decoding time $O(n\log n)$. 
\end{itemize}
One of our main tools is a polynomial-time algorithm for decoding an arbitrary tensor code $C=C_1\otimes\dotsc\otimes C_t$ from $\frac{d_{\min}(C)}{2\max\{d_{\min}(C_1),\dotsc,d_{\min}(C_t) \}}-1$ adversarial errors. Crucially, this algorithm does not require the codes $C_1,\dotsc,C_t$ to themselves be decodable in polynomial time.
\end{abstract}

\section{Introduction}

Reed-Muller (RM) codes, which were introduced by Reed and Muller in 1954 \cite{Reed1954RM,muller1954RM}, are one of the simplest and most widely used
families of codes. Their codewords can be viewed as the evaluation vectors (over $\F_2^m$) of all polynomials of degree at most $r$ in $m$ variables. Although RM codes were recently shown to achieve capacity on the erasure channel  \cite{kudekar2016erasure} as well as all BMS channels for both the bit error \cite{reeves2021bitcapacity} and the block error \cite{abbe2023rmcapacityBSC}, we do not know of any polynomial-time algorithm for decoding them in the constant-rate regime. In this paper, we introduce a variant of RM codes called \emph{Tensor Reed-Muller codes}, where the $m$ variables are split into groups and the degree requirement is applied to each group separately\footnote{See Section \ref{sectionTRM} for the formal definition
.}. We prove that Tensor Reed-Muller codes achieve capacity efficiently: 

\begin{theorem}\label{mainthm}
Consider any noise probability $p>0$ and any rate $R<1-h(p).$ Then for any integers $n\in\mathbb{N}$ and $t\geq 4$, we can construct a Tensor Reed-Muller code $\mathsf{TRM}(r_1,m_1;\dotsc;r_t,m_t)$ of length $n^{1\pm o(1)}$ and rate $R\pm o(1)$ such that: 
\begin{enumerate}
    \item If $t=3$, there exists a decoder $D$ for $\mathsf{TRM}(r_1,m_1;r_2,m_2;r_3,m_3)$  with worst-case runtime $O(n\log{\log n})$ and decoding success probability $1-n^{-\omega(\log n)}$ under $p$-noisy errors. 
    \item If $t>3$, there exists a decoder $D$ for $\mathsf{TRM}(r_1,m_1;\dotsc;r_t,m_t)$  with worst-case runtime $O(n{\log n})$ and decoding success probability $1-2^{-n^{\frac{1}{2}-\frac{1}{2(t-2)}-o(1)}}$ under $p$-noisy errors.
\end{enumerate}
\end{theorem}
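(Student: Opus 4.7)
The plan is to combine a careful choice of Tensor Reed-Muller parameters with a multi-stage iterative decoder that exploits both the capacity-achieving property of component $\mathsf{RM}$ codes under random noise and the polynomial-time adversarial tensor decoder stated in the abstract. The crucial observation is that although no efficient capacity-achieving decoder is known for individual $\mathsf{RM}$ codes, the tensor structure lets me combine cheap per-fiber decoders with one heavy (but polynomial-time) call to the adversarial tensor decoder on a small residual sub-problem; this hybrid yields quasilinear total time.

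\textbf{Construction.} Given a target rate $R<1-h(p)$, I would first pick component rates $R_1,\dotsc,R_t$ with $\prod_i R_i = R$ and each $R_i<1-h(p)$, then choose $(r_i,m_i)$ so that $\mathsf{RM}(r_i,m_i)$ has rate close to $R_i$ and $\sum_i m_i \approx \log_2 n$. For $t\geq 4$ the $m_i$ are taken roughly equal, so each component has length about $n^{1/t}$. For $t=3$ I would split the length unevenly, with one "long" component carrying most of the rate and two "short" components of polylogarithmic length that can be handled quickly and exactly; this is what lets the $t=3$ case pay only $O(n\log\log n)$ at the cost of a weaker $n^{-\omega(\log n)}$ failure bound.

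\textbf{Iterative decoding.} View the received word as a $t$-dimensional array; each fiber along dimension $i$ is a codeword of $\mathsf{RM}(r_i,m_i)$. The algorithm processes the first $t-2$ dimensions one at a time. In round $i$, for each fiber along dimension $i$, run a per-component decoder (a fast brute-force or list decoder for short components, the half-distance Reed decoder for medium ones). Because the noise is i.i.d.\ Bernoulli$(p)$, Hoeffding gives that any fixed fiber has close to the expected $p\cdot 2^{m_i}$ errors with probability $1-\exp(-\Omega(2^{m_i}))$. Fibers whose decoder fails, or returns a codeword suspiciously far from the received word, are flagged \emph{suspect} and their contents are treated as unknown in subsequent rounds. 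After $t-2$ rounds, the residual errors lie on the union of the suspect-fiber sets, which with high probability is sparse and structured, and the remaining two-dimensional slice forms a smaller tensor code whose leftover error count falls within the bound $\frac{d_{\min}(C)}{2\max_i d_{\min}(C_i)}-1$ needed by the adversarial tensor decoder. Invoking that decoder on the residual sub-problem finishes the decoding in $\operatorname{poly}(n^{2/t}) = o(n)$ additional time.

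\textbf{Main obstacle.} The delicate part is controlling error propagation and the shape of the suspect-fiber set after several rounds. A fiber that is mis-decoded (returning an incorrect $\mathsf{RM}$ codeword) injects fresh errors along orthogonal fibers, so I must argue that (a) such mis-decodings are caught by the suspect filter with high probability, and (b) after $t-2$ rounds the union of suspect fibers really does meet the adversarial tensor decoder's input requirements. The error exponent $2^{-n^{1/2-1/(2(t-2))-o(1)}}$ for $t\geq 4$ should emerge from iterating the per-fiber concentration through the $t-2$ inner rounds while tuning the suspect-filter threshold carefully; the $n^{-\omega(\log n)}$ bound for $t=3$ follows from the polylogarithmic length of the two short components, where a single fiber failure has probability $\exp(-\Omega(\log^c n))$ for some $c>1$ and a union bound over $n/\operatorname{polylog}(n)$ fibers gives the stated superpolynomial success rate.
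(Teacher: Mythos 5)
Your high-level theme (per-fiber decoding in several stages, followed by a call to the adversarial tensor decoder) is close to the paper's, but your parameter choices and the organization of the decoding rounds have a genuine gap, and the way you invoke the adversarial decoder doesn't match what it actually provides.

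The most serious issue is the construction for $t\geq 4$: you take all component codes to have roughly equal length $n^{1/t}$ and each rate $R_i<1-h(p)$. But then the very first decoding round already requires an efficient decoder for a constant-rate Reed-Muller code of length $n^{1/t}$ facing $p$-noise, which is exactly the open problem the paper is circumventing. Brute force is out (codebook size $2^{\Omega(n^{1/t})}$), Reed's half-distance decoder corrects only $d_{\min}/2\approx n^{\Theta(1/t)}$ adversarial errors (far fewer than the $\approx p\cdot n^{1/t}$ random errors present), and the erasure-to-error decoder of \cite{saptharishi2017efficient} only works in the rate-$1-o(1)$ regime. The paper instead uses a strongly unbalanced split: a tiny first component $\mathsf{RM}(r_1,m_1)$ with $2^{m_1}=\Theta(\log n)$ carries the entire rate reduction to $R$ and is decoded by brute-force ML (so that the capacity result of \cite{abbe2023rmcapacityBSC} applies, giving per-row failure probability $2^{-2^{\Omega(\sqrt{\log\log n})}}$), while \emph{every other} component has rate $1-o(1)$. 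After the first pass the residual error rate is so low that \cite{saptharishi2017efficient} handles the second (still small) component, and all remaining components only need to tolerate a number of errors within the adversarial decoder's budget. Requiring every $R_i<1-h(p)$, as you do, throws away the rate-amplification mechanism that makes the scheme tick.

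A second gap is the role of the adversarial tensor decoder. You run $t-2$ rounds of per-fiber decoding and then invoke the adversarial decoder on a "remaining two-dimensional slice" in $\mathrm{poly}(n^{2/t})=o(n)$ time. In the paper, there are only \emph{two} rounds of per-fiber decoding (over the two small axes), and then Algorithm~2 is called once on the \emph{entire} length-$n$ tensor, taking $O(n\log n)$ time; the erasure/suspect-filter bookkeeping you describe as a standalone iterative round is precisely what lives inside that call. The error exponent $2^{-n^{1/2-1/(2(t-2))-o(1)}}$ then comes from a single Chernoff bound on how many $(t-2)$-indexed slices still contain an error after the two cheap rounds, compared against the adversarial budget $d_{\min}(C)/(2\max_i d_{\min}(C_i))-1$, not from iterating per-fiber concentration $t-2$ times as you suggest. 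For $t=3$ your description also has the roles reversed: in the paper it is the \emph{short} component that carries the low rate, not the long one; the long component must have rate $1-o(1)$ precisely so it can be decoded after the noise has been reduced.

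In short, your proposal correctly identifies that one should combine cheap per-axis decoders with the adversarial tensor decoder, but it is missing the two load-bearing ideas: (i) a tiny first axis of length $\Theta(\log n)$ that carries the rate and is brute-force ML decoded, leveraging the RM capacity theorem; and (ii) a second small axis decoded via the erasure-to-error algorithm of \cite{saptharishi2017efficient} to push the residual error probability down to $n^{-\omega(\log n)}$ before a single global call to the adversarial decoder. Without these, the first decoding round has no valid algorithm and the claimed runtime and error bounds don't follow.
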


Our analysis makes use of the following result for adversarial errors, which is of independent interest:

\begin{theorem}\label{thmadvintro}
Consider any integers $r_1\leq m_1,\dotsc, r_t\leq m_t$ and define $n=2^{m_1+\dotsc+m_t}.$ Then there is an $O(n\log n)$-time algorithm for decoding the code $C:=\mathsf{TRM}(r_1,m_1;\dotsc;r_t,m_t)$ from 
    \begin{align*}
        \left\lceil\frac{d_{\min}(C)}{2\max_i\big\{2^{m_i-r_i}\big\}}\right\rceil-1
    \end{align*}
    adversarial errors.
\end{theorem}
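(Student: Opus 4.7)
The plan is a two-phase ``localize-then-solve'' algorithm exploiting the tensor structure of $C$ together with the fast Reed-Muller transform. Write $d_i:=2^{m_i-r_i}$ for the minimum distance of $\mathsf{RM}(r_i,m_i)$, so $d_{\min}(C)=\prod_i d_i$, and set $D_i:=d_{\min}(C)/d_i$, which is the minimum distance of the $(t-1)$-fold tensor code $\bigotimes_{j\ne i}C_j$ obtained by removing the $i$-th factor. Let the number of adversarial errors $E$ satisfy $E<\lceil d_{\min}(C)/(2\max_i d_i)\rceil$; note that $2E<D_{\min}:=\min_i D_i\le d_{\min}(C)$, so unique decoding is information-theoretically guaranteed, and the algorithmic task is to find the unique closest codeword in $O(n\log n)$ time.

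\textbf{Phase 1 (Localization).} I would view the received word $y$ as a $t$-dimensional array. For each direction $i\in[t]$ and each $k\in[2^{m_i}]$, the hyperslice $y|_{H_{i,k}}$ (fixing the $i$-th coordinate to $k$) lies in $\bigotimes_{j\ne i}C_j$ plus the restricted error $e|_{H_{i,k}}$, which has weight at most $E<D_i/2<D_i$. Hence $e|_{H_{i,k}}$ cannot itself be a codeword of the slice code, and the set $S_i$ of hyperslices with nonzero syndrome exactly identifies the erroneous hyperslices: $|S_i|\le E$ and $e$ is supported in the box $B:=S_1\times\cdots\times S_t$. All hyperslice syndromes can be read off from the ``direction-$j$ syndrome tensors'' $T_j:=(I\otimes\cdots\otimes H_j\otimes\cdots\otimes I)\,y$, where $H_j$ is the parity-check of $\mathsf{RM}(r_j,m_j)$; each $T_j$ is computed in $O(n\log n_j)$ time via the fast Reed-Muller (M\"obius-style) transform, so Phase 1 runs in $O(n\log n)$ total.

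\textbf{Phase 2 (Recovery within $B$).} Each $T_i$ supplies, for every 1-dimensional fiber in direction $i$, the $\mathsf{RM}(r_i,m_i)$-syndrome of that fiber in constant time. Only fibers whose other coordinates all lie in $\prod_{j\ne i}S_j$ can carry error, so at most $\prod_{j\ne i}|S_j|\le E^{t-1}$ fibers need to be processed; on each such fiber, the error is supported on $S_i$ and is recovered by inverting the small linear system whose matrix is the restriction of $H_i$ to columns in $S_i$. Iterating this along an appropriately chosen direction $i^*$, and using the global uniqueness $2E<d_{\min}(C)$ to resolve ambiguities when they arise, yields $e|_B$; combined with the bound $E<d_{\min}(C)/(2\max_i d_i)$, a careful amortization gives a total work of $O(n\log n)$.

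\textbf{Main obstacle.} The delicate step is Phase 2. The hypothesis $E<\lceil d_{\min}(C)/(2\max_i d_i)\rceil$ does \emph{not} in general imply $|S_i|<d_i$ for every $i$ (this already fails when, e.g., all $d_i$ are equal and $t\ge 3$), so the restricted parity-check $H_i|_{S_i}$ may fail to have full column rank and the per-fiber linear system need not have a unique solution. The resolution must rely on the global uniqueness $2E<d_{\min}(C)$, either by carefully choosing, per fiber, a direction in which the restricted parity check is injective, or by cross-checking fiber syndromes across several directions in a recursive peeling scheme. The crux of the theorem is to show that this combination can be implemented in $O(n\log n)$ time, rather than via a brute Gaussian elimination on the potentially $|B|\le E^t$ system, by piggybacking on the already-computed tensors $T_1,\dotsc,T_t$ and the tensor structure of Reed-Muller fast transforms.
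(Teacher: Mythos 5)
You have identified the obstacle in your own Phase~2 honestly, and that obstacle is indeed a genuine gap: the sets $S_i$ of hyperslices with nonzero syndrome satisfy only $|S_i|\le E$, which is far larger than $d_i-1$ in the constant-rate regime, so the restricted parity-check columns do not give you unique per-fiber systems, and you have not shown how to resolve this in $O(n\log n)$ time. The paper avoids this problem entirely by not trying to localize the error support and invert linear systems. Its Algorithm~\ref{algotensor} is recursive on $t$: it decodes each of the $n_t$ subtensors $A^1,\dotsc,A^{n_t}$ (the slices perpendicular to the last axis) by recursion, and whenever a subtensor either still has erasures or is moved by more than half the $(t-1)$-fold minimum distance, that entire subtensor is overwritten with erasure symbols. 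After this, every $t$-axis vector has the \emph{same} erasure pattern, and a single round of erasure decoding along axis $t$ (Lemma~\ref{rmtester} in the RM case) finishes the job, with a final Hamming-distance check to forbid miscorrection.

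The step you were missing is the quantitative amplification this recursion provides. Your $S_t$ counts all slices touched by \emph{any} error, of which there can be up to $E$. The paper instead only cares about slices that \emph{fail} to decode recursively, and by the inductive hypothesis a failing slice must already contain at least $\frac{d_{\min}(C_1)\cdots d_{\min}(C_{t-1})}{2\max_{i<t}d_{\min}(C_i)}$ errors. With $E<\frac{d_{\min}(C)}{2\max_i d_{\min}(C_i)}$ total errors, this forces strictly fewer than $d_{\min}(C_t)$ slices to fail, which is precisely the erasure budget along the last axis; and the Hamming-distance guard (line~\ref{linechecks}) plus a triangle-inequality argument rules out any slice being decoded to a wrong codeword. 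So the number of ``bad'' slices is controlled not by $|S_t|$ but by a much smaller quantity, and no localization box $B=S_1\times\cdots\times S_t$ or global linear system is ever formed. The runtime recurrence $R(n_1,\dots,n_t)\le n_t R(n_1,\dots,n_{t-1})+\frac{n}{n_t}T(n_t)+O(n)$ then gives $O\bigl(\sum_i\frac{n}{n_i}T(n_i)\bigr)$, which is $O(n\log n)$ for Reed--Muller factors since $\sum_i n\log n_i=n\log n$. Your Phase~1 syndrome-tensor machinery is correct as a localization observation but is not actually needed for the argument; the crux you flagged as ``the crux of the theorem'' is exactly what the erasure-conversion recursion solves.
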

One can modify our algorithm to decode any tensor code $C:=C_1\otimes\dotsc\otimes C_t$ from 

\noindent $\left \lceil\frac{d_{\min}(C)}{2\max\{d_{\min}(C_1),\dots ,d_{\min}(C_t)\}}\right \rceil-1$ 
adversarial errors, 
but in that case the algorithm runs in time $O(n\sum_in_i)$, for $n_i$ the length of code $C_i$ 
(see 
Section \ref{sectionadversarialtensot}.)

\subsection{Previous Work}\label{sectionpreviouswork}
Reed-Muller codes have in recent years attracted a lot of attention for their decoding performances under random noise. 
By bounding their weight distribution appropriately, \cite{abbe2015RMlowrate,sberlo2020capacity} first showed that Reed-Muller codes achieve capacity on the erasure channel (BEC) and the symmetric channel (BSC) in the regimes where the rate of the code is either very close to $0$ or very close to $1$. \cite{kudekar2016erasure} then leveraged the double transitivity of their permutation group to show that Reed-Muller codes of constant rates achieve capacity on the BEC. 

\cite{samorodnitsky2020weightboundhalf} combined the results of \cite{kudekar2016erasure} with new $\ell_q-$norm inequalities to  obtain better bounds on the weight distributions of doubly transitive codes.
This allowed \cite{hazla2021constantfractiondecoding} to prove that Reed-Muller codes of constant rates can decode a constant fraction of random errors, although the maximum code rate allowed was below the capacity of the channel. The first capacity results 
were obtained in \cite{reeves2021bitcapacity}, which proved that for any $R\in(0,1)$, the asymptotic bit error probability of a rate-$R$ Reed-Muller code
vanishes on any memoryless symmetric channel (BMS) whose capacity is greater than $R$. That still left open the question of showing that the block error probability vanishes as well, which was finally proven in \cite{abbe2023rmcapacityBSC,abbe2024polynomialfreimanruzsareedmullercodes}:

\begin{theorem}[\cite{abbe2023rmcapacityBSC}]\label{errorcapacity}
Consider any error parameter $p\in(0,\frac{1}{2})$ and any rate $R<1-h(p)$. Then any sequence of Reed-Muller codes $\big\{\mathsf{RM}(r_i,m_i)\big\}_i$ of asymptotic rate $R$ satisfies  
\begin{align*}
    \Pr_{z\sim p}\Big[ D_{ML}(c+z)= c \Big]
&\geq 1-2^{-2^{\Omega(\sqrt{m_i})}}
\end{align*}
for every $c\in \mathsf{RM}(r_i,m_i)$, where $z\sim p $ denotes a $p$-noisy error string and $D_{ML}$ denotes the maximum-likelihood decoder for the code $\mathsf{RM}(r_i,m_i).$
\end{theorem}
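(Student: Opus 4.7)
The plan is to combine a weight-enumerator reduction with a recursive amplification built on the nested structure of Reed-Muller codes, seeded by the bit-error capacity result of \cite{reeves2021bitcapacity}. First I would reduce the block-error event to a union bound: by linearity of $\mathsf{RM}(r,m)$ and the standard Chernoff pairwise estimate for the BSC,
\[
\Pr_{z \sim p}\bigl[D_{ML}(c+z) \neq c\bigr] \;\leq\; \sum_{w \geq d_{\min}} W_w \cdot \bigl(4p(1-p)\bigr)^{w/2},
\]
so everything reduces to a sharp enough upper bound on the weight enumerator $W_w$ of $\mathsf{RM}(r,m)$. Because $\mathrm{GA}(m, \mathbb{F}_2)$ acts doubly transitively on the coordinates of $\mathsf{RM}(r,m)$, I would feed the Reeves-Pfister bit-error bound into the hypercontractive / $\ell_q$-norm framework for doubly transitive codes developed in \cite{samorodnitsky2020weightboundhalf}, which yields an initial bound on $W_w$ making the above sum $o(1)$ at every rate below capacity.

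To sharpen $o(1)$ to $2^{-2^{\Omega(\sqrt{m})}}$, I would apply a recursive step in the spirit of Dumer-type decoding. Fix $k = \Theta(\sqrt{m})$ and consider the $2^k$ disjoint parallel affine subspaces of codimension $k$ obtained by fixing the last $k$ evaluation variables. The restriction of any $f \in \mathsf{RM}(r,m)$ to each such subspace is a codeword of $\mathsf{RM}(r, m-k)$, and the BSC noise on different restrictions is independent. Moreover, the coefficients of the $2^k$ restricted polynomials depend polynomially on the fixed variables in a graded-RM fashion, giving an outer RM-like structure linking the $2^k$ inner codewords. Since the rate of $\mathsf{RM}(r, m-k)$ differs from that of $\mathsf{RM}(r,m)$ by only $o(1)$ when $k = o(m)$, the previous step guarantees that each inner decoding succeeds with constant probability $1 - \epsilon$ for some $\epsilon < 1$; the outer RM structure then corrects the inner failures provided their number is at most a constant fraction of $2^k$. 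A Chernoff bound on the number of inner failures then yields a block-error probability of order $\exp(-\Omega(2^k)) = 2^{-2^{\Omega(\sqrt{m})}}$.

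The main obstacle is the outer-level decoding, because an inner misdecoding outputs a full incorrect codeword of $\mathsf{RM}(r, m-k)$ rather than a detectable erasure: the outer decoder must correct \emph{adversarial} errors, not random ones, up to a constant fraction of $2^k$. This forces a list-decoding step at the outer level and constrains how large $k$ can be relative to the slack between the rate and capacity; balancing the rate loss of the inner code against the exponent $2^k$ is precisely what pins the optimal choice to $k = \Theta(\sqrt{m})$, producing the stated exponent. A secondary difficulty is that the $2^k$ inner failures are not quite independent — they are correlated through the outer polynomial structure — so a conditioning argument (first fix the outer codeword, then argue that the residual inner BSC noise remains fresh) is needed to justify the Chernoff step.
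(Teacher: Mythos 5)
First, a framing note: the paper does not prove this statement. It is Theorem~3, quoted from \cite{abbe2023rmcapacityBSC} with attribution, and used as a black box in the analysis of Algorithm~\ref{algotrm}. So the comparison is with the proof in the cited work.

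Your first step contains a concrete error. You claim that combining the Reeves--Pfister bit-error bound with the $\ell_q$-norm weight-enumerator bounds of \cite{samorodnitsky2020weightboundhalf} makes the union-bound sum $\sum_{w \ge d_{\min}} W_w\,(4p(1-p))^{w/2}$ vanish at every rate below capacity. This is not so: the $\ell_q$-norm machinery is seeded by the BEC result of \cite{kudekar2016erasure}, not by any BSC bit-error bound, and feeding its weight-enumerator estimate into the union bound is exactly what \cite{hazla2021constantfractiondecoding} does --- with the result, recorded in this paper's own introduction, that one gets constant-fraction decoding only at rates strictly below capacity. The Reeves--Pfister theorem is a statement about bit-error under the Bayes-optimal decoder; it is not a bound on $W_w$ and does not sharpen one. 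Converting that bit-error guarantee into a block-error guarantee all the way up to capacity is precisely the contribution of \cite{abbe2023rmcapacityBSC}, and it cannot be absorbed as a free ``initial bound'' in your step~1. Your step~2 would then have no base case at rates near capacity, since the inner $\mathsf{RM}(r,m-k)$ restrictions have rate slightly \emph{above} that of $\mathsf{RM}(r,m)$, and no available result gives even constant-probability ML success for those inner codes at such rates by a union bound.

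The recursive structure in your step~2 --- restrict to $2^k$ disjoint subcubes with $k=\Theta(\sqrt m)$, decode each $\mathsf{RM}(r,m-k)$ restriction with independent fresh noise, and Chernoff-bound the number of inner failures --- is genuinely in the spirit of the cited proof, and the exponent $2^{-2^{\Omega(\sqrt m)}}$ does come from that balance. But you correctly identify and then do not resolve the real obstruction: an inner misdecoding produces a full incorrect codeword, i.e.\ an adversarial undetected error at the outer level, and the ``outer RM-like structure'' lives over the huge alphabet of $\mathsf{RM}(r,m-k)$ codewords, so no off-the-shelf outer decoder applies. Making that aggregation step work (a form of list-decoding / majority argument using the fact that two distinct $\mathsf{RM}(r,m)$ codewords must disagree on many subcubes, and doing so carefully enough to survive the recursion) is where the technical content of the cited proof lies; sketching it as ``forces a list-decoding step'' leaves the proof's hardest part open. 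Finally, the correlation worry you raise as a secondary difficulty is a non-issue: conditional on the transmitted $c$, the BSC noise $z$ is i.i.d.\ and its restrictions to disjoint subcubes are genuinely independent, so the Chernoff step needs no extra conditioning argument.
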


The line of work described above established that Reed-Muller codes achieve capacity on all BMS channels. The most natural next challenge is then to understand whether or not they can be decoded efficiently. Reed provided the first known algorithm in \cite{Reed1954RM}, allowing for the correction of half the minimum distance many adversarial errors. For random errors, efficient algorithms for decoding $\mathsf{RM}(r,m)$ are known in the case where $r$ is constant. The first algorithms \cite{sidelnikov1992,sakkour2005algo1} focused on the regime where $r\in\{1,2\}$, but algorithms for all values of $r=O(1)$  have since been obtained. They exploit either the recursive structure of Reed-Muller codes \cite{dumer2004first,dumer2006second,dumer2006third,abbe2020RPAalgorithm} or their minimum-weight parity checks \cite{santi2018decodingparitycheck}. In the regime $r=\omega(1)$, the only known algorithm with proven polynomial runtime is the algorithm of \cite{saptharishi2017efficient}, which successfully decodes any error pattern in $\mathsf{RM}(m-2t,m)$ for which the same erasure pattern can be corrected in $\mathsf{RM}(m-t,m)$.



\begin{theorem}[\cite{saptharishi2017efficient}]\label{erasuretoerror}
    For any integers $m$ and $t\leq m$, there exists an $O\Big(2^m\cdot\textnormal{poly}(\binom{m}{\leq t})\Big)$ - time decoder $D$ for the code $\mathsf{RM}(m-2t,m)$ with the following property: for every $z\in\{0,1\}^{2^m}$, if 
    \begin{align*}
        \Big\{ x\in \mathsf{RM}(m-t,m):z_i\geq x_i\textnormal{ for all }i \Big\}=\{0\},
    \end{align*}
    then every $c\in\mathsf{RM}(m-2t,m)$ satisfies $$D(c+z)=c.$$ 
\end{theorem}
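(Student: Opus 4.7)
The plan is to reduce the error-decoding problem in $\mathsf{RM}(m-2t,m)$ to an erasure-decoding problem in the larger code $\mathsf{RM}(m-t,m)$. Two structural observations motivate this reduction. First, the containment $\mathsf{RM}(m-2t,m)\subseteq\mathsf{RM}(m-t,m)$ means every codeword $c$ of the smaller code is also a codeword of the larger one. Second, the hypothesis on $z$ is precisely the statement that the erasure pattern $\supp(z)$ is correctable in $\mathsf{RM}(m-t,m)$. Hence, if $\supp(z)$ were known a priori, one could simply erasure-decode $y=c+z$ in $\mathsf{RM}(m-t,m)$ to recover $c$ via Gaussian elimination on a generator matrix of $\mathsf{RM}(m-t,m)$ (of dimension $\binom{m}{\leq t}$), well within the claimed runtime of $O(2^m\cdot\textnormal{poly}(\binom{m}{\leq t}))$.

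The challenge is that $\supp(z)$ is not known, so the algorithm must jointly identify the error support and the codeword. My plan is to exploit the multiplicative structure
$$\mathsf{RM}(m-2t,m)\cdot\mathsf{RM}(t,m)\subseteq\mathsf{RM}(m-t,m),$$
which holds because pointwise multiplication of evaluations corresponds to multiplication of polynomials and the degrees add. For any $g\in\mathsf{RM}(t,m)$, the product $g\cdot c$ is a codeword of $\mathsf{RM}(m-t,m)$, so $g\cdot y=g\cdot c+g\cdot z$ is a corrupted codeword of $\mathsf{RM}(m-t,m)$ whose error is supported in $\supp(z)\cap\supp(g)\subseteq\supp(z)$. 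Letting $g$ range over a basis of $\mathsf{RM}(t,m)$ yields $\binom{m}{\leq t}$ consistent views of the error that collectively constrain both $c$ and $\supp(z)$. Concretely, I would set up a linear system over $\F_2$ whose unknowns are error bits $\{\zeta_a\}_{a\in\F_2^m}$, whose constraints are the syndrome equations $H\zeta=Hy$ (for $H$ the parity-check matrix of $\mathsf{RM}(m-2t,m)$) together with the requirement $g_i\cdot(y-\zeta)\in\mathsf{RM}(m-t,m)$ for each basis element $g_i$, and solve it coordinate-by-coordinate, each coordinate involving linear algebra on matrices of dimension $O(\binom{m}{\leq t})$.

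The main obstacle is establishing uniqueness of the solution under the hypothesis. If $y=c_1+z_1=c_2+z_2$ with both $z_i$ valid, the difference $e=c_1-c_2$ lies in $\mathsf{RM}(m-2t,m)\subseteq\mathsf{RM}(m-t,m)$ and equals the indicator $\1_{\supp(z_1)\triangle\supp(z_2)}$, which is supported in $\supp(z_1)\cup\supp(z_2)$; this union may strictly exceed each $\supp(z_i)$, so the hypothesis does not immediately yield $e=0$. My plan is to resolve this by applying the multiplicative structure once more: for each $g\in\mathsf{RM}(t,m)$ vanishing on $\supp(z_2)\setminus\supp(z_1)$, the product $g\cdot e$ is a $\mathsf{RM}(m-t,m)$-codeword supported in $\supp(z_1)$, which must vanish by the hypothesis on $z_1$; choosing a rich enough family of such $g$ (by a dimension count against $\supp(z_2)\setminus\supp(z_1)$) should force $e=0$, establishing uniqueness and completing the proof.
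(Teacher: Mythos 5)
First, a contextual note: the paper does not prove Theorem~\ref{erasuretoerror}; it is cited verbatim from \cite{saptharishi2017efficient} and used as a black box. So there is no ``paper's own proof'' to compare against, and your attempt has to be judged on its own. Your high-level starting point is the right one and does match the SSV philosophy: use the containment $\mathsf{RM}(m-2t,m)\subseteq\mathsf{RM}(m-t,m)$, the multiplicative closure $\mathsf{RM}(m-2t,m)\cdot\mathsf{RM}(t,m)\subseteq\mathsf{RM}(m-t,m)$, and the reformulation of the hypothesis as ``$\supp(z)$ is a correctable erasure pattern for $\mathsf{RM}(m-t,m)$.''

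However, there are two genuine gaps. \emph{Algorithm.} The linear system you propose is vacuous: if $H$ is the parity-check matrix of $\mathsf{RM}(m-2t,m)$, then $H\zeta=Hy$ already says $y-\zeta\in\mathsf{RM}(m-2t,m)$, and multiplying any such vector by any $g_i\in\mathsf{RM}(t,m)$ automatically lands in $\mathsf{RM}(m-t,m)$ --- so the extra constraints $g_i\cdot(y-\zeta)\in\mathsf{RM}(m-t,m)$ are implied and add nothing. The system you wrote down is exactly ``$\zeta\in y+\mathsf{RM}(m-2t,m)$,'' which has $2^{\dim\mathsf{RM}(m-2t,m)}$ solutions; the property that singles out the true $z$ (namely, that $\supp(\zeta)$ be an $\mathsf{RM}(m-t,m)$-correctable set) is \emph{not} a linear constraint and cannot be folded into this system. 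The actual SSV algorithm uses a different, non-obvious local procedure built from the syndrome, and ``solve it coordinate-by-coordinate'' needs to be replaced by a concrete such procedure before the claimed $O(2^m\cdot\mathrm{poly}(\binom{m}{\leq t}))$ runtime can be justified. \emph{Uniqueness.} Your observation that $g|_{T_2}=0$ with $\deg g\le t$ forces $g\cdot e=0$ (hence $g|_{\supp(e)}=0$) is correct, but by symmetry all this yields is that the degree-$\le t$ vanishing ideals of $T_1$, $T_2$, and $\supp(e)=T_1\sqcup T_2$ coincide. A ``dimension count against $\supp(z_2)\setminus\supp(z_1)$'' does not by itself contradict $e\ne 0$: nothing you have written precludes two disjoint, $(t-1)$-independent sets $T_1,T_2$ whose degree-$\le t$ vanishing ideals agree. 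Closing this requires using the low-degree structure of $e$ itself (i.e., that $\1_{T_1\sqcup T_2}$ is a polynomial of degree $\le m-2t$), which is exactly the non-trivial step in \cite{saptharishi2017efficient} and is not supplied here.
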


In particular, there are currently no known polynomial-time algorithms for reliably decoding Reed-Muller codes of constant rates
. In this paper, we leverage the results of \cite{saptharishi2017efficient} to obtain efficient decoding algorithms for Tensor Reed-Muller codes of constant rates (see Theorem \ref{mainthm}). 

Our general approach shares some high-level similarities with Forney's concatenated codes \cite{forney1966concatenated}, as it also involves 
combining two codes to create a new one. 
However, tensoring presents two significant advantages. First, 
the tensor product of two Reed-Muller codes $\mathsf{RM}(r_1,m_1)$ and $\mathsf{RM}(r_2,m_2)$ has a simple and intuitive description: its codewords are the evaluation vectors of all multilinear polynomials in the variables $x_{1},\dotsc,x_{m_1},y_{1},\dotsc,y_{m_2}$ with degree at most $r_1$ in the variables $x_{1},\dotsc,x_{m_1}$ and degree at most $r_2$ in the variables $y_{1},\dotsc,y_{m_2}$, and codewords can further be viewed as tensors (see Section \ref{sectionprelim}). We do not know of a similarly simple description for the concatenation of $\mathsf{RM}(r_1,m_1)$ and $\mathsf{RM}(r_2,m_2)$. 
Second, concatenation usually requires the alphabet of the outer code to be large. One can try to mimic concatenation by arranging the entries of each codeword of a binary code $C_{\textnormal{out}}$ into groups of size $k_1$ and encoding each group of coordinates using an inner code $C_{\textnormal{in}}$ of dimension $k_1$, but this can have a significant impact on the minimum distance of the final code.

\subsection{Proof Techniques}
There are two regimes in which we currently have efficient decoding algorithms for Reed-Muller codes:
\begin{itemize}
    \item Regime 1: when the noise is  smaller than $1-\frac{\binom{m}{\leq m-t}}{2^m}$, one can 
    use the work of \cite{saptharishi2017efficient} to decode a code $\mathsf{RM}(m-2t,m)$ of length $n=2^m$.
    \item Regime 2: when the blocklength of the code is very small, brute-force decoding, which runs in time $O(2^n)$, may have reasonable runtime.
\end{itemize}
This work combines the two regimes above to obtain an efficient decoder for Tensor Reed-Muller codes. We take a short code $\mathsf{RM}(r_1,m_1)$ of rate $R-o(1)$ and tensor it with a longer code $\mathsf{RM}(r_2,m_2)$ of rate $1-o(1)$. 
The codewords of the resulting code $\mathsf{TRM}(r_1,m_1;r_2,m_2)$ are all the matrices $A\in\{0,1\}^{2^{m_2}\times 2^{m_1}}$ such that each row of $A$ is a codeword of $\mathsf{RM}(r_1,m_1)$ and each column of $A$ is a codeword of $\mathsf{RM}(r_2,m_2)$. To decode our tensor code, we first use the brute-force algorithm to decode each row of $A$ independently. After this first step, which takes polynomial time as long as $2^{m_1}\approx \log n$, 
only a $o(1)$ fraction of $A$'s entries will have been incorrectly decoded. We then use the high-rate algorithm of \cite{saptharishi2017efficient} to decode each column of $A$ independently. At the end of this second decoding step
, the fraction of incorrectly decoded entries will have dropped below $n^{-\omega(\log n)}$, allowing us to take a union bound over all coordinates. To further reduce the decoding error probability, we can take the tensor product of $\mathsf{TRM}(r_1,m_1;r_2,m_2)$ with an even longer RM code $\mathsf{RM}(r_3,m_3)$ of rate $1-o(1)$ and repeat the same argument.

The ideas outlined above, further iterated, would allow us to decode $t$-Tensor Reed-Muller codes (for some $t$) with a decoding failure probability of about $2^{-n^{1/4}}.$ To bring this error rate closer to the distance-optimal $2^{-\Omega(\sqrt{n})}$, we introduce a new algorithm for decoding tensor codes from adversarial errors. This algorithm works for any tensor code $C_1\otimes \dotsc\otimes C_t$ and relies on the fact that erasures are generally easier to decode than errors.\footnote{For any linear code $C\subseteq\{0,1\}^n$, given a partially-erased codeword of $ C$, one can use the parity-check matrix 
to obtain a system of $n-\textnormal{dim }C$ linear equations in $e\leq n$ unknowns, where $e$ is the number of erased coordinates
    . This can be solved by Gaussian elimination in time $O(n^3).$ For Reed-Muller codes, we improve this decoding time to $O(n\log n)$ whenever the number of erasures is below the minimum distance - see Lemma \ref{rmtester}.\label{footnotetest}}
First, replace each row that is not a codeword of $C_1$ by an all-erasures row. Then, go through each column and determine whether or not there is a unique codeword $c\in C_2$ compatible with the (now partially-erased) column. If so, replace the column by $c$; otherwise, replace every entry in the column by an erasure symbol. For $t>2$, repeat this process with every additional axis. By adding additional checks that ensure we never return subtensors that are 
too far away from the corresponding input, we obtain in Theorem \ref{thmadversarial} an algorithm for decoding any arbitrary tensor code $C=C_1\otimes\dotsc\otimes C_t$ from $\frac{d_{\min}(C)}{2\max_i\{d_{\min}(C_i)\}}-1$ adversarial errors. For Reed-Muller codes, 
this algorithm runs in time $O(n\log n).$ 

Our final construction combines the ideas of the above two paragraphs: the first two Reed-Muller codes $\mathsf{RM}(r_1,m_1)$ and $\mathsf{RM}(r_2,m_2)$ are of sublinear lengths and taken as in the first paragraph. As mentioned above, this allows us to bring the error probability down to about $n^{-\omega(\log n)}.$ The remaining  Reed-Muller codes $\big\{\mathsf{RM}(r_i,m_i)\big\}_{i=3,\dotsc,t}$ all have $m_i =\frac{\log n-m_1-m_2}{t-2}$ and $r_i=\frac{m_i+m_i^{3/4}}{2}.$ By the arguments we outlined in the second paragraph, we can recover any sent codeword of $\mathsf{TRM}(r_3,m_3;\dotsc;r_t,m_t)$ with fewer than about $n^{\frac{1}{2}-\frac{1}{2(t-2)}}$ errors. But since the first pass on $\mathsf{RM}(r_1,m_1)$ and $\mathsf{RM}(r_2,m_2)$ brought the error rate down to $n^{-\omega(\log n)}$, by the Chernoff bound, the probability that there are more than $n^{\frac{1}{2}-\frac{1}{2(t-2)}}$ errors is bounded by $2^{-n^{\frac{1}{2}-\frac{1}{2(t-2)}-o(1)}}$.


\section{Notation and Preliminaries}\label{sectionprelim}

Throughout this paper, we will use $\mathbb{N}=\{1,2,3,\dotsc\}$ to denote the set of positive integers and $\log(x)$ to denote the logarithm of $x$ in base $2.$  We define the entropy function $h:[0,1]\rightarrow[0,1]$ to be
\begin{align*}
    h(x):=-x\log(x)-(1-x)\log(1-x).
\end{align*}
For any positive real number $a$, we define $\lceil a\rceil$ to be the \emph{ceiling} of $a$, i.e. the smallest integer $n\in\mathbb{N}$ such that $n\geq a.$ For any $n\in\mathbb{N}$, we define the set $[n]:=\{1,2,\dots,n\}.$ For any $n\in\mathbb{N}$ and any $p\in[0,1]$, we denote by $z\sim_n p$ the Boolean random vector of length $n$ whose entries are independent and identically distributed
Bernoulli variables of probability $p.$ When $n$ is clear from context, we will drop the subscript and write $z\sim p.$ We will need the following two very standard results (see e.g. \cite{theorybook} and \cite{bentley1980mastertheorem} respectively): 

\begin{lemma}[The Chernoff bound]\label{chernoff}
 Let $X_1,X_2,\dotsc,X_n$ be independent 
 random variables taking values in $\{0,1\}$ and define $E:=\sum_{i=1}^n\mathbb{E}[X_i]$. Then for any $\alpha\geq 1 $, we have
 \begin{align*}
 \Pr\left[\sum_{i=0}^n X_i\geq \alpha E\right]\leq \left(\frac{e^{\alpha-1}}{\alpha^\alpha} \right)^E.
 \end{align*}
 \end{lemma}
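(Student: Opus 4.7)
The plan is to follow the standard Chernoff/Cramér moment-generating-function approach. First I would fix an auxiliary real parameter $t > 0$ and apply Markov's inequality to the nonnegative random variable $\exp(t \sum_i X_i)$. Since the event $\sum_i X_i \geq \alpha E$ is equivalent to $\exp(t \sum_i X_i) \geq \exp(t \alpha E)$, Markov yields $\Pr[\sum_i X_i \geq \alpha E] \leq \E[\exp(t \sum_i X_i)] \cdot \exp(-t \alpha E)$.

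Next I would exploit the independence of the $X_i$ to factor the moment generating function: $\E[\exp(t \sum_i X_i)] = \prod_i \E[\exp(t X_i)]$. Writing $p_i := \E[X_i]$, each factor equals $1 - p_i + p_i e^t = 1 + p_i(e^t - 1)$, which is bounded above by $\exp(p_i(e^t - 1))$ via the elementary inequality $1 + y \leq e^y$. Multiplying these bounds across $i$ and recalling $\sum_i p_i = E$ gives $\E[\exp(t \sum_i X_i)] \leq \exp(E(e^t - 1))$, so the overall tail bound becomes $\exp\!\bigl(E(e^t - 1 - t \alpha)\bigr)$.

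Finally I would optimize the free parameter $t$ to make this bound as tight as possible. Differentiating the exponent $E(e^t - 1 - t\alpha)$ in $t$ and setting it to zero yields $e^t = \alpha$, so the optimal choice is $t^\ast = \ln \alpha$; since $\alpha \geq 1$, this choice satisfies the admissibility condition $t \geq 0$. Substituting $t = \ln \alpha$ back into the exponent produces $E(\alpha - 1 - \alpha \ln \alpha)$, which equals $\ln\!\bigl( (e^{\alpha-1}/\alpha^\alpha)^E \bigr)$. Exponentiating yields exactly the inequality claimed in the lemma.

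There is no genuine obstacle here: the argument is entirely textbook (as the paper itself signals by citing a standard reference). The only point worth double-checking is that the optimizer $t^\ast = \ln \alpha$ lies in the admissible half-line $t \geq 0$, which is automatic from the hypothesis $\alpha \geq 1$, and that the bound $1+y \leq e^y$ is used in the correct direction (it goes the right way for any real $y$, so in particular for $y = p_i(e^t-1)$ regardless of the sign of $e^t - 1$).
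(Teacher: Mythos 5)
Your proof is correct and is the standard Cram\'er--Chernoff derivation. The paper does not supply its own proof of this lemma --- it cites it as a standard result from a reference --- and your argument (Markov's inequality applied to $e^{t\sum_i X_i}$, factoring the moment generating function by independence, the elementary bound $1+y\le e^y$, and optimizing at $t=\ln\alpha$, which is admissible since $\alpha\ge 1$) is precisely the textbook proof the citation points to.
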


 \begin{lemma}[The Master theorem]\label{master}
     Suppose $T(n)$ denotes the running time of an algorithm on an input of size $n$, and suppose $T(n)$ can be expressed recursively as
     \begin{align*}
         T(n)\leq aT\left(\frac{n}{a}\right)+O(n)
     \end{align*}
     for some constant $a>0.$
     Then if $T(1)=O(1)$, we have $T(n)\leq O(n\log n).$
 \end{lemma}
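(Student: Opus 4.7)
The plan is to unroll the recurrence into a recursion tree and sum the work contributed at each level. The key observation is that with $a > 1$ constant (which is needed for any nontrivial conclusion: the case $a = 1$ turns the inequality into $T(n) \leq T(n) + O(n)$, and $a < 1$ makes the subproblems grow), the tree has depth $\log_a n$ and at each level the total work is $O(n)$. Indeed, level $k$ consists of $a^k$ subproblems each of size $n/a^k$, with $O(n/a^k)$ non-recursive work each, for a total of $O(n)$ per level. Multiplying by the depth $\log_a n = O(\log n)$ yields $O(n \log n)$.

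Concretely, I would fix a constant $c > 0$ such that the $O(n)$ additive term is at most $c n$ for large $n$, and set $T(1) \leq D$. Iterating the recurrence $k$ times gives
\begin{align*}
T(n) \;\leq\; a^k \, T\!\left(\frac{n}{a^k}\right) \;+\; c n \sum_{j=0}^{k-1} 1 \;=\; a^k \, T\!\left(\frac{n}{a^k}\right) \;+\; c k n.
\end{align*}
Choosing $k = \lceil \log_a n \rceil$ so that $n/a^k \leq 1$, the first term is at most $a^k D = O(n)$ and the second is $O(n \log n)$, which gives the claimed bound.

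A cleaner alternative is a direct induction on $n$. Guess $T(n) \leq C n \log n + D' n$ for constants $C, D'$ to be chosen. Substituting into the recurrence,
\begin{align*}
a T(n/a) + c n \;\leq\; C n \log(n/a) + D' n + c n \;=\; C n \log n + D' n - (C \log a - c) n,
\end{align*}
so any $C \geq c / \log a$ makes the inductive step go through, and one takes $D'$ large enough to dominate the base case $T(1) \leq D$. The only mild technical wrinkle — and the spot where sloppiness can cause trouble — is that $n/a$ need not be an integer: one really works with $\lceil n/a \rceil$ and must absorb an additive $O(1)$ overhead per level into the constants, taking care that the induction does not become circular at small $n$. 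This is standard but worth flagging; otherwise the proof is completely routine.
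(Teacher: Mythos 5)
The paper states this as a standard result and cites \cite{bentley1980mastertheorem} rather than proving it, so there is no proof in the paper to compare against. Your argument — unrolling the recurrence into a recursion tree of depth $\lceil \log_a n \rceil$ with $O(n)$ total work per level, together with the alternative substitution-method induction with guess $T(n) \leq Cn\log n + D'n$ — is the standard proof and is correct. Two remarks you make are worth affirming: the lemma as stated with ``$a>0$'' is technically too weak (one needs $a>1$, and really integer $a\geq 2$, for the subproblem sizes to shrink and the recursion to bottom out), and the integrality of $n/a$ is a genuine wrinkle in general. Neither matters for the paper's actual use of the lemma, since in the proof of Lemma~\ref{rmtester} the recursion has $a=2$ and all input sizes are exact powers of two, so $n/a^k$ is always an integer and the depth is exactly $m = \log n$.
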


\subsection{Reed-Muller Codes}

\noindent We will denote by $\mathsf{RM}(r,m)$ the  Reed-Muller code with $m$ variables and degree $r$. The codewords of the Reed-Muller code $\mathsf{RM}(r,m)$ are the evaluation vectors (over all points in $\F_2^m$) of all multivariate polynomials of degree $\leq r$ in $m$ variables. We refer the reader to the survey \cite{abbe2023survey2} for a more thorough exposition to Reed-Muller codes.

\subsection{Tensor Reed-Muller Codes}\label{sectionTRM}
\noindent For any choice of Reed-Muller codes $\mathsf{RM}(r_1,m_1)$, $\mathsf{RM}(r_2,m_2),\dots,\mathsf{RM}(r_t,m_t)$, we define the Tensor Reed-Muller code $\mathsf{TRM}(r_1,m_1;r_2,m_2;\dots;r_t,m_t)$ as follows: 
Consider $m:=\sum_im_i$ variables $\left\{x_{ij}\right\}_{i=1,\dots,t}^{j=1,\dots m_i}$ and define the set 
\begin{align*}
    \mathcal{S}:=\Big\{S_1\cup S_2\cup \dots\cup S_t:\textnormal{ for all $i$, }S_i\subseteq\{x_{i1},\dots ,x_{im_i}\}\textnormal{ and }|S_i|\leq r_i\Big\}.
\end{align*}
Abusing notation, we say that a monomial is in $\mathcal{S}$ if the set of its constituent variables is in $\mathcal{S}$. Then the evaluation vector of a polynomial $f(x_{11},\dots,x_{tm_t})$ over all points in $\{0,1\}^m$ is in $\mathsf{TRM}(r_1,m_1;\dots;r_t,m_t)$ if and only if all the monomials of $f$ are in $\mathcal{S}$. We note that the generator matrix of the code $\mathsf{TRM}(r_1,m_1;\dots;r_t,m_t)$ is the tensor product of the generator matrices of the Reed-Muller codes $\mathsf{RM}(r_1,m_1),\dots,\mathsf{RM}(r_t,m_t)$. We also note that the codewords of $\mathsf{TRM}(r_1,m_1;\dots;r_t,m_t)$ can be seen as the $t$-dimensional tensors $A\in\{0,1\}^{2^{m_1}\times\dots\times 2^{m_t}} $ satisfying the condition that for every $i\in[t]$, every $i$-axis vector of $A$ is a codeword of $\mathsf{RM}(r_i,m_i)$.\footnote{We say that a vector $v\in\{0,1\}^{n_i}$ is an $i$-axis vector of a tensor $A\in\{0,1\}^{n_1\times\dotsc\times n_t}$ if it is a "row" of $A$ along the $i^\textnormal{th}$ axis - formally, if there exist indices $\{j_k\in[n_k]\}_{k\in[t]\setminus i} $ such that for all $ s\in[n_i]$, we have $v_s=A_{j_1,\dots,j_{i-1},s,j_{i+1},\dots ,j_t }$. \label{defnaxisvector}
} Finally, we note that the rate of $\mathsf{TRM}(r_1,m_1;\dots;r_t,m_t)$ is equal to the product of the rates of the codes $\big\{\mathsf{RM}(r_i,m_i)\big\}_{i=1}^t.$

\section{Helpful Lemmas}
In this section, we will prove performance guarantees for two decoding algorithms that will be used as subroutines 
throughout this paper. We start with an algorithm for efficiently testing and correcting Reed-Muller codes from adversarial erasures.


\begin{lemma}\label{rmtester}
    For any nonnegative integers $r\leq m$, there is an $O(m2^m)$-time algorithm which, given 
    an input string $y\in\{0,1,*\}^{2^m}$ with fewer than $2^{m-r}$ erasure symbols, determines whether or not there exists a codeword $c\in\mathsf{RM}(r,m)$ such that $y_i\in\{c_i,*\}$ for all $i\in[2^m]$. The algorithm returns such a codeword $c$ if it exists and an error message otherwise.
    
    
\end{lemma}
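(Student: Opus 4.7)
The plan is to exploit the Plotkin $(u,u+v)$ decomposition of Reed-Muller codes: every codeword of $\mathsf{RM}(r,m)$ can be written uniquely as a concatenation $(u,u+v)$ with $u\in\mathsf{RM}(r,m-1)$ and $v\in\mathsf{RM}(r-1,m-1)$. Split the input $y$ into halves $y^L,y^R\in\{0,1,*\}^{2^{m-1}}$; the algorithm will first recover $v$ and then $u$ via two recursive calls on half-length inputs.

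Concretely, first I would form $y'\in\{0,1,*\}^{2^{m-1}}$ defined by $y'_i=y^L_i\oplus y^R_i$ whenever neither entry is erased, and $y'_i=*$ otherwise. Any valid $v$ must be compatible with $y'$, so a recursive call for $\mathsf{RM}(r-1,m-1)$ on $y'$ either returns $v$ or certifies none exists. Next I would build $y''$ by setting $y''_i=y^L_i$ when $y^L_i\neq *$, otherwise $y''_i=y^R_i\oplus v_i$ when $y^R_i\neq *$, and otherwise $y''_i=*$; a recursive call for $\mathsf{RM}(r,m-1)$ on $y''$ returns $u$ (or fails). If both succeed, I would output $(u,u+v)$; if either fails, the algorithm returns an error.

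For correctness: soundness is immediate, since by construction the output lies in $\mathsf{RM}(r,m)$ and is compatible with $y$. For completeness, if $c=(u^*,u^*+v^*)$ is any valid codeword compatible with $y$, then $v^*$ is compatible with $y'$ and $u^*$ is compatible with $y''$, so both recursive calls succeed; uniqueness of the output follows from the minimum distance $2^{m-r}$ of $\mathsf{RM}(r,m)$ being strictly larger than the erasure count.

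The crux of the argument, and the step requiring the most care, is verifying that the erasure budget is preserved by the recursion. Let $e_L$ (resp. $e_R$) denote the number of positions where only $y^L$ (resp. $y^R$) is erased, and $e_B$ the number where both are; by hypothesis $e_L+e_R+2e_B<2^{m-r}$. Then $y'$ has exactly $e_L+e_R+e_B<2^{m-r}$ erasures, which is strictly less than the minimum distance $2^{(m-1)-(r-1)}=2^{m-r}$ of $\mathsf{RM}(r-1,m-1)$, and $y''$ has exactly $e_B<2^{m-r-1}$ erasures, which is strictly less than the minimum distance $2^{(m-1)-r}$ of $\mathsf{RM}(r,m-1)$. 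Hence both recursive subproblems satisfy the lemma's hypothesis. The base cases $r\geq m$ (all strings are codewords, fill erasures arbitrarily) and $r=0$ (repetition code, check that the non-erased entries are constant) are handled directly in $O(2^m)$ time. Finally, the runtime satisfies $T(2^m)\leq 2\,T(2^{m-1})+O(2^m)$, which by the Master theorem (Lemma \ref{master}) gives $T(2^m)=O(m\cdot 2^m)$, as required.
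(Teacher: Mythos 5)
Your proof is correct, and it follows the same Plotkin $(u,u+v)$ recursion as the paper's Algorithm~\ref{tester}, but with one genuinely different and arguably cleaner twist in how the second recursive subproblem is set up. The paper, after recovering $c^{\textnormal{sum}}$ from $y^0+y^1$, recurses on whichever of $y^0,y^1$ has \emph{fewer erasures}; since the half not recursed on is reconstructed as $c^0+c^{\textnormal{sum}}$ without ever consulting it, the paper must append an explicit final check that the output is consistent with $y$. You instead use the already-recovered $v$ to \emph{merge} the two halves into $y''$, filling a left-half erasure with $y^R_i\oplus v_i$ whenever $y^R_i$ is present. This has two benefits: (i) the erasure count of $y''$ drops to $e_B$ (positions erased in \emph{both} halves), which is at most $\min(e_L+e_B,\,e_R+e_B)$ and in particular strictly below $2^{m-r-1}$, so the budget is met without needing to pick the better half; and (ii) compatibility of the output $(u,u+v)$ with $y$ follows automatically — one can check by cases that $u$ agreeing with $y''$ and $v$ agreeing with $y'$ already forces $(u,u+v)$ to agree with $y$ on every non-erased coordinate — so the post-hoc verification step is unnecessary. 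Your completeness argument does implicitly rely on the fact that the recursive call on $y'$ returns exactly $v^*$ (so that $y''$ is built with the \emph{right} $v$); this is fine because $y'$ has fewer than $2^{m-r}=d_{\min}(\mathsf{RM}(r-1,m-1))$ erasures, making the consistent codeword unique, but it would be worth saying that one sentence explicitly rather than appealing only to the minimum distance of $\mathsf{RM}(r,m)$. The base-case and runtime analyses match the paper.
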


\begin{proof}

Our decoder is given in Algorithm \ref{tester}. 
    \begin{figure}[ht]
\centering
\begin{minipage}{0.9\linewidth}
\begin{algorithm}[H]\label{tester}\caption{Codeword testing and erasure correction for Reed-Muller codes}
\KwIn{Two integers $0\leq r\leq m$ and a vector $y\in\{0,1,*\}^{2^{m}}$ with fewer than $2^{m-r}$ erasure entries.}
\KwOut{A codeword $c\in\mathsf{RM}(r,m)$ with $y_i\in\{c_i,*\}$ for all $i\in[2^m]$, if such a $c$ exists
; an error message otherwise.
}

\If{$r=0$}{{If 
there exists $  b\in\{0,1\}$ such that $y_i\in\{b,*\}$ for all $i$, output $(b,b,\dotsc,b)$. Otherwise, output 
an error message. 
}}
\ElseIf{$r=m$}{{Output $y.$
}}

\Else{
{Define $y^0:=(y_1,\dots,y_{2^{m-1}})$ and $y^1:=(y_{2^{m-1}+1},\dots,y_{2^{m}})$ and let $y^\textnormal{sum}:=y^0+y^1$ 
(defining $y^\textnormal{sum}_i=*$ whenever either $y^0_1=*$ or $y^1_i=*$).
Run Algorithm \ref{tester} on input $(r-1,m-1,y^\textnormal{sum})$ and denote the output you receive by   
$c^\textnormal{sum}$. If $c^\textnormal{sum}$ is an error message, abort and output an error.\label{linesum} 
}

\If{$y^0$ contains fewer erasure symbols than $y^1$}{{Run Algorithm \ref{tester} on input $(r,m-1,y^{0})$, denoting the output you receive by $c^0$. If $c^0$ is an error message, abort and output an error. Otherwise, define $c$ to be the concatenation $c:=(c^0, c^0+c^{\textnormal{sum}})$. If $y_i\in\{c_i,*\}$ for all $i\in[2^{m}]$, output $c$; otherwise, output an error.
\label{linecase1}}}

\Else{{Run Algorithm \ref{tester} on input $(r,m-1,y^{1})$, denoting the output you receive by $c^1$. If $c^1$ is an error message, abort and output an error. Otherwise, define $c$ to be the concatenation $c:=( c^1+c^{\textnormal{sum}},c^1)$. If $y_i\in\{c_i,*\}$ for all $i\in[2^{m}]$, output $c$; otherwise, output an error.
\label{linecase2}}}

}
\end{algorithm}
\end{minipage}
\end{figure}
We first prove by induction on $m$ that it always finds the desired codeword $c\in\mathsf{RM}(r,m)$ if such a codeword exists. Note that the base case $m=1$ holds trivially, since Algorithm \ref{tester} always succeeds when $r=0$ or $r=m$
.

For the inductive case, suppose there exists a codeword $c\in\mathsf{RM}(r,m)$ such that $y$ agrees with $c$ on all non-erased entries. Let $f(x_1,\dotsc,x_m)$ be the unique polynomial whose evaluation vector is $c$ 
and express $f$ as
\begin{align}\label{decomposef}
    f(x_1,\dots,x_m)=f_0(x_2,\dots,x_m)+x_1\cdot f_1(x_2,\dots,x_m).
    \end{align}
     We make the following two observations:
    \begin{enumerate}[(i)]
        \item Define $c^0\in\{0,1\}^{2^{m-1}}$ to be the vector containing the first half of $c$'s entries. Then $c^0$ is the evaluation vector of the polynomial $f_0(x_2,\dots,x_m).$\label{item1c0}
        \item Define $c^1\in\{0,1\}^{2^{m-1}}$ to be the vector containing the second half of $c$'s entries. Then $c^1$ is the evaluation vector of $f_0(x_2,\dots,x_m)+f_1(x_2,\dots,x_m)$. \label{item2c1}
    \end{enumerate} 
    Both (\ref{item1c0}) and (\ref{item2c1}) follow immediately from the fact that the indices of $c$ are ordered lexicographically (and thus an index $v\in\F_2^m$ is in the first half if and only if $v_1=0$). 

    Now, since $y^0:=(y_1,\dotsc,y_{2^{m}})$ is a corrupted evaluation vector for the polynomial $f_0$ and $y^1:=(y_{2^{m}+1},\dotsc,y_{2^m})$ is a corrupted evaluation vector for the polynomial $f_0+f_1$, the vector $y^\textnormal{sum}:=y^0+y^1$ must be an evaluation vector for $f_1$ corrupted with fewer than $2^{m-r}$ erasures
    . By induction, since $f_1$ has degree $\leq r-1$, running Algorithm \ref{tester} on input $(r-1,m-1,y^\textnormal{sum})$ will then return the correct evaluation vector $c^\textnormal{sum}$ for the polynomial $f_1(x_2,\dotsc,x_m)$. (See line \ref{linesum}.)
    
    Furthermore, since $y$ contains fewer than $2^{m-r}$ erasures, one of $y^0$ or $y^1$ must contain fewer than $2^{m-r-1}$ erasures. Without loss of generality, we assume that $y^0$ contains fewer erasures\footnote{The proof is identical in the other case, with the roles of $y^0$ and $y^1$ reversed.}.
    By induction, running Algorithm \ref{tester} on input $(r-1,m,y^0)$ will then return the correct evaluation vector $c^0$ for the polynomial $f_0.$ (See line \ref{linecase1}.) Since $c^1$ is the evaluation vector for $f_0+f_1$ and we have obtained the evaluation vectors $c^0,c^\textnormal{sum}$ for $f_0$ and $f_1$, setting $c^1=c^0+c^{\textnormal{sum}}$ will successfully recover $c^1$. 
    Thus the algorithm indeed outputs the correct codeword $c=(c_0,c_1)$. (See line \ref{linecase1}.)

We have proven that whenever there exists a (by our theorem's requirement, unique)  codeword $c\in\mathsf{RM}(r,m)$ that is consistent with $y$, our algorithm returns it. 
Note also that Algorithm \ref{tester} never outputs a codeword $c\in\mathsf{RM(r,m)}$ that is not consistent with $y$; 
this is because before returning $c$, the algorithm verifies that $y_i\in\{c_i,*\}$ for all $i\in[2^m]$ (see lines \ref{linecase1} and \ref{linecase2}). Thus Algorithm \ref{tester} always succeeds. As for the runtime analysis, we note that at each step, the algorithm spends $O(2^m)$ time performing basic computations and then makes 2 recursive calls on instances of length $2^{m-1}$. 
By the Master theorem (Lemma \ref{master}), the total runtime will thus be $O(m2^m).$

\end{proof}

Our second lemma is essentially a special case of the work of \cite{saptharishi2017efficient} (Theorem \ref{erasuretoerror}), which we will use in the following form to bound the running time and error probability of a decoder for high-rate Reed-Muller codes.

\begin{lemma}\label{combinedecoders}
Consider any integers $m>t>0$ and any $p\leq \frac{2^{-m+\frac{t}{2}}}{5}$. Then there exists a decoder $\tilde{D}$ for the Reed-Muller code $\mathsf{RM}(m-t,m)$ with the following two properties.


    \begin{enumerate}
        \item $\tilde{D}$ runs in time $O\Big(2^m\cdot\textnormal{poly}(\binom{m}{\leq \frac{t}{2}})\Big)$.
        \item Under random errors of probability $p$, $\tilde{D}$ succeeds in decoding any sent codeword $c\in \mathsf{RM}(m-t,m)$ with probability
        \begin{align*}
        \Pr_{z\sim p}\Big[ \tilde{D}(c+z)= c \Big]\geq 1-2^{-2^{t/2}}.
    \end{align*}
    \end{enumerate}
\end{lemma}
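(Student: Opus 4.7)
My plan is to apply Theorem \ref{erasuretoerror} with the parameter named $t$ in that theorem replaced by $t/2$ (I will assume $t$ is even for cleanness; the odd case is analogous up to constants). Plugging $t/2$ into Theorem \ref{erasuretoerror} immediately yields a decoder $\tilde{D}$ for the code $\mathsf{RM}(m - 2(t/2), m) = \mathsf{RM}(m-t,m)$ with runtime $O\bigl(2^m \cdot \poly(\binom{m}{\leq t/2})\bigr)$, which already gives item~1 of the lemma.

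For the error guarantee of item~2, Theorem \ref{erasuretoerror} tells us that $\tilde{D}$ correctly recovers any transmitted codeword $c$ from $c+z$ whenever no nonzero codeword $x \in \mathsf{RM}(m - t/2, m)$ satisfies $\supp(x) \subseteq \supp(z)$. Since the minimum distance of $\mathsf{RM}(m - t/2, m)$ equals $2^{t/2}$, the existence of such a covering codeword would force $|z| \geq 2^{t/2}$. Therefore
\[
\Pr_{z \sim p}\bigl[\tilde{D}(c+z) \neq c\bigr] \;\leq\; \Pr_{z \sim p}\bigl[|z| \geq 2^{t/2}\bigr],
\]
and it remains to bound the right-hand side by $2^{-2^{t/2}}$.

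The weight $|z|$ is a sum of $2^m$ independent Bernoulli$(p)$ variables with expectation $E := 2^m p \leq 2^{t/2}/5$. Writing $2^{t/2} = \alpha E$ for some $\alpha \geq 5$ and invoking the Chernoff bound (Lemma \ref{chernoff}) will give a bound of the form $(e/\alpha)^{2^{t/2}} e^{-E}$. The main technical point, and the place where the specific constant $5$ in the hypothesis matters, is verifying that this quantity is genuinely at most $2^{-2^{t/2}}$ rather than the weaker $2^{-c \cdot 2^{t/2}}$ for some $c < 1$. A short calculation reduces this to the numerical inequality $\ln \alpha + 1/\alpha \geq 1 + \ln 2$ for $\alpha \geq 5$, which holds with a little margin ($\ln 5 + 1/5 \approx 1.81 > 1 + \ln 2 \approx 1.69$). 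I expect this constant tuning to be the only delicate step; everything else is a direct invocation of Theorem \ref{erasuretoerror} together with standard concentration.
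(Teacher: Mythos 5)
Your proposal is correct and follows the same route as the paper: apply Theorem~\ref{erasuretoerror} with the parameter halved to get a decoder for $\mathsf{RM}(m-t,m)$, observe that failure requires at least $d_{\min}(\mathsf{RM}(m-t/2,m)) = 2^{t/2}$ errors, and bound that event by the Chernoff bound using $E = 2^m p \leq 2^{t/2}/5$. The paper states the last step as "follows immediately from the Chernoff bound"; your explicit verification that $(e/\alpha)^{2^{t/2}} e^{-E} \leq 2^{-2^{t/2}}$ reduces to $\ln\alpha + 1/\alpha \geq 1 + \ln 2$ for $\alpha \geq 5$ (which holds, and is monotone in $\alpha$) correctly fills in that omitted calculation.
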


\begin{proof}
By Theorem \ref{erasuretoerror}, it will suffice to show that the Reed-Muller code $\mathsf{RM}(m-\frac{t}{2},m)$ can recover from $p$-noisy erasures with success probability $1-2^{-2^{t/2}}.$
Since the code $\mathsf{RM}(m-\frac{t}{2},m)$ has minimum distance $2^{t/2}$, it is enough to prove that for independent Bernoulli variables $X_1,X_2,\dotsc,X_{2^m}$ with Bernoulli parameter $p\leq \frac{2^{-m+\frac{t}{2}}}{5}$, we have
    \begin{align*}
        \Pr\Big[X_1+X_2+\dotsc X_{2^m}\geq 2^{t/2} \Big] \leq 2^{-2^{t/2}}.
    \end{align*}
But this follows immediately from the Chernoff bound (Lemma \ref{chernoff}).
\end{proof}

\section{Decoding Arbitrary Tensor Codes from Adversarial Errors}\label{sectionadversarialtensot}

In this section, we will prove a generalization of Theorem \ref{thmadvintro}, which we state in Theorem \ref{thmadversarial}.
For any code $C\subseteq\{0,1\}^n$, define the function $f_C:\{0,1,*\}^n\rightarrow\{0,1,*\}^n$ to be
\begin{align}\label{deffi}
    f_C(x):=\begin{cases}
c & \text{if $c\in C$, $x_i\in\{c_i,*\}$ for all $i$, and $x$ has fewer than $d_{\min}(C)$ erasures}\\
(*,\dotsc,*) & \text{otherwise.}
\end{cases}
\end{align}
The function $f_C$ essentially tells us whether or not a partially-erased binary string with fewer than $d_{\min}(C)$ erasures is consistent with any codeword of $C$. It can always be computed in time $O(n^3)$ (see Note \ref{footnotetest}), and its runtime dictates the runtime of our following decoder for adversarial errors.
\begin{theorem}\label{thmadversarial}
    Consider any linear codes $C_1\subseteq\{0,1\}^{n_1},\dotsc,C_t\subseteq\{0,1\}^{n_t}$ and define $n:=\prod_{i=1}^tn_i  .$ Suppose there exists a function $T:\mathbb{N}\rightarrow \mathbb{N}$ such that $T(m)\geq m$ for all $m\in\mathbb{N}$ and such that for all $i\in[t]$, there is a $T(n_i)$-time algorithm for computing the function $f_{C_i}$ defined in (\ref{deffi}).
    Then there is an $O\left(\sum_{i=1}^t\frac{n}{n_i}\cdot T(n_i)\right)
$-time algorithm for decoding the tensor code  $C:=C_1\otimes\dotsc\otimes C_t$ from 
    \begin{align*}
        \left\lceil\frac{d_{\min}(C)}{2\max\big\{d_{\min}(C_1),\dots ,d_{\min}(C_t)\big\}}\right\rceil-1
    \end{align*}
    adversarial errors.
\end{theorem}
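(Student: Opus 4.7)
The plan is to give a recursive algorithm $\mathrm{Decode}_t$ and prove its correctness by induction on $t$. Write $D_i := d_{\min}(C_i)$ and $D_{\max} := \max_i D_i$. The base case $t=1$ is immediate: $\mathrm{Decode}_1(Y) := f_{C_1}(Y)$, which handles the allowed $\lceil D_1/(2D_1)\rceil - 1 = 0$ errors. For $t \geq 2$, $\mathrm{Decode}_t(Y)$ operates in three stages. First, slice $Y$ into $n_1$ axis-$1$ slices $Y_1,\dotsc,Y_{n_1}$ (each a $(t{-}1)$-dimensional sub-tensor) and recursively compute $\hat A_j := \mathrm{Decode}_{t-1}(Y_j;\, C_2,\dotsc,C_t)$. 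Second, apply a safeguard to each $\hat A_j$: if $\hat A_j$ contains any $*$ symbol, or if the Hamming distance $d(Y_j,\hat A_j)$ is at least $\tfrac{1}{2} D_2 D_3 \cdots D_t$, overwrite $\hat A_j$ by the all-$*$ sub-tensor. Third, stack the $\hat A_j$'s into a tensor $Y'$ and apply $f_{C_1}$ to every axis-$1$ fiber of $Y'$.

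For correctness, let $A$ be the true codeword and $e_j$ the number of errors in $Y_j$, so $\sum_j e_j = e < D_1 \cdots D_t/(2 D_{\max})$. Since $D_1 \leq D_{\max}$, this yields the uniform bound $e_j \leq e < \tfrac{1}{2} D_2 \cdots D_t$ for every $j$. I will argue that the safeguard exactly separates correct from incorrect recursions. If $\hat A_j = A_j$, then $d(Y_j, \hat A_j) = e_j < \tfrac{1}{2} D_2 \cdots D_t$ and the safeguard passes. If $\hat A_j$ is instead a wrong codeword of $C_2 \otimes \cdots \otimes C_t$, then $d(\hat A_j, A_j) \geq D_2 \cdots D_t$, so the triangle inequality gives $d(Y_j, \hat A_j) \geq D_2 \cdots D_t - e_j > \tfrac{1}{2} D_2 \cdots D_t$ and the safeguard triggers. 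Outputs containing $*$'s are also erased by the first clause. Thus after the second stage, each $\hat A_j$ is either $A_j$ or all-$*$.

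Next, I bound the number of erased slices. By the induction hypothesis, $\mathrm{Decode}_{t-1}$ returns $A_j$ whenever $e_j < \tau := D_2 \cdots D_t/(2 \max_{i \geq 2} D_i)$, so only slices with $e_j \geq \tau$ may be erased, a count of at most $e/\tau < D_1 \max_{i \geq 2} D_i / D_{\max} \leq D_1$, strictly. In the third stage, each axis-$1$ fiber of $Y'$ has its erasures precisely at the erased-slice indices and correct values elsewhere (by the previous paragraph), hence fewer than $D_1 = d_{\min}(C_1)$ erasures, so $f_{C_1}$ recovers the true axis-$1$ fiber and the stacked output equals $A$.

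The runtime recurrence $T_t(n) = n_1 \cdot T_{t-1}(n/n_1) + O(n) + (n/n_1) \cdot T(n_1)$, where the $O(n)$ accounts for the $n_1$ safeguard distance computations, unfolds to $T_t(n) = \sum_i (n/n_i)\, T(n_i) + O(tn)$, with the $O(tn)$ absorbed because $T(n_i) \geq n_i$. The main obstacle will be the simultaneous tightness of the two thresholds: the safeguard threshold $\tfrac{1}{2} D_2 \cdots D_t$ must be large enough to keep every correct recursion (requiring $e_j < \tfrac{1}{2} D_2 \cdots D_t$ uniformly) yet small enough to reject every wrong codeword, while the recursion-failure threshold $\tau$ must leave strictly fewer than $D_1$ failing slices; balancing these is what forces the $D_{\max}$ factor in the denominator of the error bound.
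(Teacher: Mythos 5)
Your algorithm and analysis are essentially the same as the paper's, up to relabeling which axis the recursion slices along (you recurse on axis $1$, the paper on axis $t$) and up to moving the distance safeguard from the last line of each recursive call to the caller side. The uniform slice bound $e_j \leq e < \tfrac{1}{2}D_2\cdots D_t$ obtained from $D_1 \leq D_{\max}$, the triangle-inequality separation of the true codeword from any wrong codeword, the counting argument putting the number of erased slices strictly below $D_1$, and the unrolled runtime recurrence (with $O(tn)$ absorbed because $T(n_i)\geq n_i$) all mirror the paper's proof.

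The one step you assume implicitly, and which the paper proves explicitly as a preliminary inductive invariant, is that $\mathrm{Decode}_{t-1}$ never outputs a fully binary tensor that fails to be a codeword of $C_2\otimes\cdots\otimes C_t$; your trichotomy ("$\hat A_j=A_j$", "$\hat A_j$ is a wrong codeword", "$\hat A_j$ contains a $*$") is only exhaustive given this, and the safeguard's clause 2 is powerless against a non-codeword binary output whose Hamming distance to $Y_j$ happens to be small. The invariant does hold, but it needs a short argument: after your safeguard, every surviving slice is a codeword of $C_2\otimes\cdots\otimes C_t$, so the erasure pattern seen by $f_{C_1}$ is identical on every axis-$1$ fiber; hence if $f_{C_1}$ returns a non-$*$ answer on all fibers, the filled-in erased slices are the same fixed $\F_2$-linear combination (determined by the erasure pattern and a parity-check matrix of $C_1$) of the surviving codeword slices, so they are themselves codewords of $C_2\otimes\cdots\otimes C_t$; combined with every axis-$1$ fiber lying in $C_1$, the output lies in $C$. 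You should state and prove the invariant "each $\mathrm{Decode}_s$ outputs a codeword or a tensor containing $*$" by induction at the outset, as the paper does; without it the case analysis underlying the safeguard's correctness has a hole.
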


\begin{remark}\label{rmk1}
By Note \ref{footnotetest}, for any linear codes $C_1,\dotsc,C_t$ we can take $T(n)=O(n^3)$, which gives us a runtime of $O\left(n\sum_{i} n_i^2  \right)$. But for Reed-Muller codes, we can do better:
by Lemma \ref{rmtester}, we can take $T(n)=O(n\log n)$, which gives a runtime of $O\big(\sum_{i}n\log n_i\big)=O(n\log n)$. 
\end{remark}

\begin{remark}
Note that in Algorithm \ref{algotensor}, at each layer $i=2,\dotsc,t$ of the decoding process, the $i$-axis erasure pattern is the same for all $i$-axis vectors within the same $i$-subtensor of $A$. One could thus use Gaussian elimination to express the erased entries in this erasure pattern as a linear combination of the non-erased entries, then go through each $i$-axis vector of $A$ and correct the erasures accordingly.  If $C_1$ is taken to be the code of maximum length among $C_1,\dots,C_t$, this will give a running time of $O(n\sum_i n_i)$ for decoding the tensor product of any linear codes $C_1,\dotsc,C_t$ of lengths $n_1,\dotsc,n_t$. 
\end{remark}

\begin{proof}
    Note that we may assume that each $n_i$ is greater than $1$, as otherwise $C$ trivially reduces to a tensor product of $t-1$ codes.
    Our algorithm for the case where each $n_i$ is greater than $1$ is given in Algorithm \ref{algotensor}. 
\begin{figure}[ht]
\centering
\begin{minipage}{0.9\linewidth}
\begin{algorithm}[H]\label{algotensor}\caption{A polynomial-time decoder for arbitrary tensor codes}
\KwIn{An $n_1\times \dotsc \times n_t$ Boolean tensor $A$.} 

\KwOut{Either a codeword of the tensor code $C_1\otimes\dotsc\otimes C_t$ or an $n_1\times \dotsc \times n_t$ tensor filled with erasure symbols.}
    
    \If{$t=1$}{
    {If $A$ has fewer than $d_{\min}(C_1)$ erasures and there is a codeword $c\in C_1$ with $A_i\in\{c_i,*\}$ for all $i\in[n_1]$, replace $A$ by $c$. Otherwise replace every entry in $A$ by an erasure symbol.}}
    \Else{
        \For{$i=1,2,\dotsc,n_t$\label{linefor1}}{{Run Algorithm \ref{algotensor} on the $n_1\times\dotsc\times n_{t-1}$ tensor $A^i$ 
        whose entries are given by $A^i_{j_1\dots j_{t-1}}=A_{j_1\dots j_{t-1}i}$. Replace the entries $\{A_{j_1\dots j_{t-1}i}  \}$ of $A$ by
        the output entries.\label{defAi}}}
         \For{every $t-$axis vector $v\in\{0,1,*\}^{n_t}$ of $A$ 
        \label{linefor2} (see definition \ref{defnaxisvector})}{
        {If $v$ has fewer than $d_{\min}(C_t)$ erasures and there is a codeword $c\in C_t$ with $v_i\in\{c_i,*\}$ for all $i$, replace $v$ by $c$. Otherwise, replace $v$ by $(*,*,\dotsc,*)$.\label{linecorrecterasures}}
        
        \label{lineendcorrect}}
    \If{{the updated tensor $A$ contains erasure symbols or its Hamming distance from the original input is at least $\frac{d_{\min}(C_1)\cdot\dotsc \cdot d_{\min}(C_t)}{2}$\label{linechecks}\label{lineif}}}{Replace every entry of $A$ by an erasure symbol.\label{lineerasures}}
    }
    Output the updated tensor $A.$
\end{algorithm}
\end{minipage}
\end{figure}
We first show by induction that it always outputs either a codeword of $\mathsf{TRM}(r_1,m_1;\dotsc ;r_{t},m_{t})$ or the all-erasures tensor. 
The base case $t=1$ is trivial. For the case $t>1$, 
we note that by 
induction, after the line-\ref{linefor1} for loop completes, each $(t-1)$-subtensor of $A$ is either a valid codeword of $\mathsf{TRM}(r_1,m_1;\dotsc ;r_{t-1},m_{t-1})$ or a tensor filled with erasure symbols. Thus the erasure pattern of each $t$-axis vector in the line \ref{linefor2}-for loop is identical. In particular, if there is a unique consistent codeword for each of these $t$-axis vectors, then the erased entries of each $t$-axis vector can be expressed as the same linear combination of non-erased coordinates. This means that the erased $(t-1)$-subtensors can be expressed as linear combinations of the non-erased $(t-1)$-subtensors; since the code is linear, the newly recovered subtensors must then be codewords of $\mathsf{TRM}(r_1,m_1;\dotsc;r_{t-1},m_{t-1}).$ Combining this with the fact that by line \ref{linecorrecterasures}, each $t$-axis vector is a codeword of $\mathsf{RM}(r_t,m_t)$ (otherwise by line \ref{lineerasures}, we would output an all-erasures tensor), we get that any Boolean output must indeed be a codeword of $\mathsf{TRM}(r_1,m_1;\dotsc;r_t,m_t)$.

Now that we have proven that our algorithm always outputs either a valid codeword or a tensor filled with erasures, we turn to showing that the algorithm correctly ouputs the sent codeword as long as there are fewer than $\frac{d_{\min}(C)}{2\max\big\{d_{\min}(C_1),\dots ,d_{\min}(C_t)\big\}}$ errors. We again proceed by induction. 
    The base case $t=1$ is trivial. 
    For the general case, we note that in order for our Algorithm \ref{algotensor} to fail in decoding a noisy codeword containing fewer than $\frac{d_{\min}(C_1)\cdot\dotsc \cdot d_{\min}(C_{t})}{2}$ errors, one of the following two statements must hold:

    \begin{enumerate}[(i)]
        \item After we decode all the $(t-1)$-dimensional subtensors $\{A^i\}_{i\in[n_t]}$ (see line \ref{defAi}), there is a non-erasure erroneous entry in at least one of the updated subtensors $A^i$. \label{point1adv}
        \item After we decode all the $(t-1)$-dimensional subtensors $\{A^i\}_{i\in[n_t]}$, 
        there are at least $d_{\min}(C_t)$ values of $i\in[n_t]$ for which the updated subtensor $A^i$ contains one or more erasures.\label{point2adv}
    \end{enumerate}
    
Indeed, if neither of these occur, then our line \ref{linecorrecterasures} will allow us to recover every entry of $A$ correctly.
We now show that neither point (\ref{point1adv}) nor point (\ref{point2adv}) can occur.
Suppose for contradiction that there exists a subtensor $A^i$ as described in point (\ref{point1adv}). Note that by line \ref{lineerasures}, the Hamming distance between $A^i$ and its corresponding input must be less than $\frac{d_{\min}(C_1)\cdot\dotsc\cdot d_{\min}(C_{t-1})}{2}$. Since $A^i$ is a codeword of $C_1\times\dotsc\times C_{t-1}$ (we proved in the first paragraph that any output of our algorithm is a codeword) and since $C_1\times\dotsc\times C_{t-1}$ has minimum distance ${d_{\min}(C_1)\cdot\dotsc\cdot d_{\min}(C_{t-1})}$, there must have been at least $\frac{d_{\min}(C_1)\cdot\dotsc\cdot d_{\min}(C_{t-1})}{2}$ corrupted entries in the $i^\textnormal{th}$ subtensor to begin with. This contradicts our theorem's requirement on the total number of errors.

Suppose instead that there are $d_{\min}(C_t)$ subtensors $\{A^{i_1},\dotsc,A^{i_{d_{\min}(C_t)}}\}$ satisfying point (\ref{point2adv}) above. Since each of these subtensors is decoded independently, by our inductive hypothesis there must have been at least $d_{\min}(C_t)\cdot \frac{d_{\min}(C_1)\cdot\dotsc \cdot d_{\min}(C_{t-1})}{2\max\big\{d_{\min}(C_1),\dots ,d_{\min}(C_{t-1})\big\}}\geq \frac{d_{\min}(C)}{2\max\big\{d_{\min}(C_1),\dots ,d_{\min}(C_t)\big\}}$ errors. But this again contradicts our theorem's requirement on the total number of errors.

This concludes the proof of correctness. We now turn to the runtime analysis. 
Define $R(n_1,\dotsc,n_t)$ to be the maximal runtime of Algorithm \ref{algotensor} on any code $C'=C_1'\otimes\dotsc\otimes C_t'$ with $C_i'\subseteq \{0,1\}^{n_i}$ for all $i.$ 
Note that for $t>1$, we have
\begin{align*}
    R(n_1,\dotsc,n_t)&\leq n_t\cdot R(n_1,\dotsc, n_{t-1})+\frac{n}{n_t}T(n_t)+\alpha n,
\end{align*}
where the first and second terms correspond to the bulk of the runtime of Algorithm \ref{algotensor} for the for-loops \ref{linefor1} and \ref{linefor2} respectively, whereas the constant $\alpha$ is chosen to be big enough that the $\alpha n$-term covers all the other operations needed throughout the algorithm
. We claim that
\begin{align*}
    R(n_1,\dotsc,n_t)\leq (\alpha+1)\sum_{i=1}^t\frac{n}{n_i}\cdot T(n_i).
\end{align*}
For $t=1$, the statement is obvious. For $t>1$, by induction we have
\begin{align*}
     R(n_1,\dotsc,n_t)&\leq (\alpha +1)n_t\sum_{i=1}^{t-1}\frac{n/n_t}{n_i}T(n_i)+\frac{n}{n_t}T(n_t)+\alpha n\\
     &\leq (\alpha+1)\sum_{i=1}^{t}\frac{n}{n_i} T(n_i),
\end{align*}
 where in the last line we used the fact that by our theorem's requirement on $T$, we have $\alpha n=\frac{\alpha n}{n_t}n_t\leq \frac{\alpha n}{n_t}T(n_t)$.

\end{proof}

\section{Decoding Tensor Reed-Muller Codes from Random Errors}\label{sectiondecodeTRM}

In this section, we leverage our Algorithm \ref{algotensor} to prove the following formal version of Theorem \ref{mainthm}. Note that it is sufficient to prove Theorem \ref{mainthm} for $t\leq \sqrt{\log n}$, since after that the $O\left(\frac{1}{t}\right)$ improvement in the error probability is subsumed by the $o(1)$ term (one can always artificially increase $t$ by taking tensor products with the trivial Reed-Muller code $\{0,1\}$).

\begin{theorem}\label{mainthmformal}
Consider any constants $p\in(0,\frac{1}2{})$ and $R<1-h(p)$. Let $n\in\mathbb{N}$ be some growing parameter and consider any corresponding integer $3\leq t\leq \sqrt{\log n}$. Define
\begin{itemize}
    \item $m_1:=\lceil\log\log n-3\rceil$ and $r_1$ to be any integer such that $\frac{  \binom{m_1}{\leq r_1}}{2^{m_1}}=R\pm o(1)$
    \item $m_2:=\lceil10\log\log n\rceil$ and $ r_2:=\lceil\frac{m_2}{2}+\sqrt{m_2}\log m_2\rceil$
    \item $ m_3=\dotsc=m_t:=\lceil\frac{\log n-m_1-m_2}{t-2}\rceil$ and $ r_3=\dotsc=r_t:=\lceil\frac{m_3+m_3^{3/4}}{2}\rceil$ 
\end{itemize}
Then the Tensor Reed-Muller code $\mathsf{TRM}(r_1,m_1;\dotsc;r_t,m_t)$ 
has length $n^{1+ o(n)}$ and rate $R\pm o(1)$. Moreover, there exists a decoder $D$ 
with the following properties:
\begin{enumerate}
    \item $D$ has runtime $O\left(n\log{\log n}\right)$ in the case $t=3$ and runtime $O\left(n\log n\right)$ in the case $t>3.$
    
    \item For every codeword $c\in \mathsf{TRM}(r_1,m_1;\dotsc;r_t,m_t)$, $D$ has decoding failure probability
\begin{align*}
        \Pr_{z\sim p}\Big[D(c+z)\neq c\Big]
    \leq \begin{cases}
    n^{-\omega(\log n)}
    & \text{if $t=3$,}\\
2^{-n^{\frac{1}{2}-\frac{1}{2(t-2)}-o(1)}} & \text{otherwise.}
\end{cases}
    \end{align*}


\end{enumerate}

\end{theorem}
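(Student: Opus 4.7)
The plan is to verify the parameter values, describe a three-phase decoder, analyze how the error probability drops axis by axis, and finally account for the runtime. For parameters: $\sum_i m_i=\log n+O(\log\log n)$ gives blocklength $n^{1+o(1)}$, while the rate is the product $\prod_i\binom{m_i}{\leq r_i}/2^{m_i}$, whose $i=1$ factor equals $R\pm o(1)$ by construction of $r_1$ and whose $i\geq 2$ factors equal $1-o(1)$ (by Chernoff, using $r_i-m_i/2=\omega(\sqrt{m_i})$ for $i\geq 2$).

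The decoder I have in mind runs in three phases. \emph{Phase 1} decodes each $1$-axis vector (of length $2^{m_1}=\Theta(\log n)$) by brute-force maximum-likelihood, using a precomputed lookup table of $2^{2^{m_1}}=n^{O(1)}$ entries so that each decoding amounts to an $O(\log n)$ table read. \emph{Phase 2} decodes each $2$-axis vector (of length $\mathrm{polylog}(n)$) using the decoder of Lemma~\ref{combinedecoders}. \emph{Phase 3} handles what remains: for $t=3$, one more invocation of Lemma~\ref{combinedecoders} along axis~$3$, and for $t\geq 4$, a single call to the adversarial-tensor decoder of Theorem~\ref{thmadversarial} on the residual code $\mathsf{TRM}(r_3,m_3;\dotsc;r_t,m_t)$.

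The central structural observation for the error analysis is that, once all $i$-axis vectors have been decoded independently, the entries of any fixed $(i+1)$-axis vector originate from distinct $i$-axis decoders, so their errors are independent. Applying Theorem~\ref{errorcapacity} in Phase~1 gives per-entry error rate $p_1\leq 2^{-2^{\Omega(\sqrt{m_1})}}$, which for $m_1=\Theta(\log\log n)$ is much smaller than any inverse polylog and in particular satisfies the hypothesis $p_1\leq 2^{-m_2+(m_2-r_2)/2}/5$ of Lemma~\ref{combinedecoders}. Phase~2 then yields a per-$2$-axis-vector failure probability at most $2^{-2^{(m_2-r_2)/2}}=n^{-\omega(\log n)}$, so the effective i.i.d.\ per-entry rate entering Phase~3 is $p_2\leq n^{-\omega(\log n)}$. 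For $t=3$, union-bounding Phase~2 over the $\mathrm{polylog}(n)$ many $2$-axis vectors already gives failure probability $n^{-\omega(\log n)}$, and the axis-$3$ pass only tightens this. For $t\geq 4$, Theorem~\ref{thmadversarial} corrects up to $\prod_{i=3}^{t}2^{m_i-r_i}/(2\cdot 2^{m_3-r_3})-1=n^{1/2-1/(2(t-2))-o(1)}$ adversarial errors in the residual tensor, and the Chernoff bound (Lemma~\ref{chernoff}) applied to the $p_2$-rate i.i.d.\ errors caps the probability of exceeding this threshold by $2^{-n^{1/2-1/(2(t-2))-o(1)}}$.

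For the runtime, Phase~1 is $O(n)$ via the lookup table; Phase~2 invokes Lemma~\ref{combinedecoders} at cost $O(2^{m_2}\cdot\mathrm{polylog}(n))$ per vector, and since $m_2=\Theta(\log\log n)$ and there are $n/2^{m_2}$ vectors the total fits within $O(n\log\log n)$ for $t=3$ and within $O(n\log n)$ for $t\geq 4$; Phase~3 for $t\geq 4$ costs $O(n\log n)$ by Remark~\ref{rmk1} and Lemma~\ref{rmtester}, and for $t=3$ is dominated by Phase~2. I expect the main obstacle to be carefully threading the chain of numerical inequalities so that the threshold required by each invoked lemma is satisfied by the error rate propagated from the previous phase, together with rigorously justifying the cross-phase independence statement above (which relies on the fact that $i$-axis decoders act independently and distinct entries of an $(i+1)$-axis vector pull from distinct $i$-axis vectors).
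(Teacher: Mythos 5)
Your high-level plan — a lookup table for axis $1$, the decoder of Lemma~\ref{combinedecoders} for axis $2$, and Theorem~\ref{thmadversarial} for the remaining axes, with errors becoming effectively fresh from one axis to the next — matches the paper's Algorithm~\ref{algotrm} and its analysis. But there are gaps in the runtime and independence arguments that are not merely cosmetic.

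\textbf{The counter mechanism.} Your Phase~2 as written does not achieve the claimed \emph{worst-case} runtime for any $t$. Running the decoder of Lemma~\ref{combinedecoders} on every $2$-axis vector costs $\frac{n}{2^{m_2}}\cdot O\bigl(2^{m_2}\cdot\textnormal{poly}\bigl(\tbinom{m_2}{\leq (m_2-r_2)/2}\bigr)\bigr)=n\cdot\textnormal{polylog}(n)$, which exceeds $O(n\log\log n)$ and (depending on the degree of the polynomial in Theorem~\ref{erasuretoerror}) may exceed $O(n\log n)$ as well. The paper instead first tests each column for membership in $\mathsf{RM}(r_2,m_2)$ via Lemma~\ref{rmtester} at cost $O(m_2 2^{m_2})$ per column (total $O(n\log\log n)$), invokes $D_2$ only on columns that fail, and — crucially — maintains a $\mathsf{counter}$ that aborts the whole algorithm if more than $n2^{-2^{(\log\log n)^{1/4}}}$ columns require $D_2$. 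The counter caps the total $D_2$ cost at $o(n)$, but it also forces a \emph{separate} probabilistic argument (a second Chernoff bound) that the counter is overwhelmingly unlikely to trip. Without an abort mechanism of this sort you cannot bound the worst-case runtime, and with it you owe an additional piece of analysis that your proposal does not contain.

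\textbf{Phase 3 for $t=3$.} Your proposed extra pass along axis~$3$ via Lemma~\ref{combinedecoders} is both unnecessary and prohibitively slow: a $3$-axis vector has length $2^{m_3}=n^{1-o(1)}$, so a single call to that decoder costs $O\bigl(2^{m_3}\cdot\textnormal{poly}\bigl(\tbinom{m_3}{\leq (m_3-r_3)/2}\bigr)\bigr)$, and since $\tbinom{m_3}{\leq (m_3-r_3)/2}$ is already $n^{\Theta(1)}$, this is a large polynomial in $n$ rather than $O(n\log\log n)$. The paper's decoder for $t=3$ simply outputs the tensor after the column pass; the union bound over all $n^{1+o(1)}$ entries already gives the stated $n^{-\omega(\log n)}$. (Relatedly, the number of $2$-axis vectors is $n/2^{m_2}$, not $\textnormal{polylog}(n)$; the union bound still works, but not for the reason you give.)

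\textbf{Where the independence lives.} Your statement that "the effective i.i.d.\ per-entry rate entering Phase~3 is $p_2$" is not quite right: after Phase~2, entries lying in the same column are correlated through the shared call to $D_2$. The paper avoids this by defining, for each $k\in[2^{m_3}]\times\dotsc\times[2^{m_t}]$, the indicator $X_k$ that the $2^{m_1}\times 2^{m_2}$ slice indexed by $k$ contains any error; these $\{X_k\}$ \emph{are} independent (distinct $k$'s use disjoint rows and disjoint columns), and the Chernoff bound is applied to $\sum_k X_k$ with the error count bounded by $2^{m_1+m_2}\sum_k X_k$. Your conclusion survives this correction, but the independence you need is at the slice level, not the entry level.
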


\begin{proof}
    Note that $\mathsf{TRM}(r_1,m_1;\dotsc;r_t,m_t)$ has rate 
    \begin{align*}
        \prod_{i=1}^t\frac{\binom{m_i}{\leq r_i}}{2^{m_i}}
        &\geq \big(R- o(1)\big)\prod_{i=3}^{t}\left( 1-\frac{2^{h(\frac{m_i-r_i}{m_i})m_i}}{2^{m_i}} \right)\\
        &\geq (R-o(1))\prod_{i=3}^{t}\left( 1-2^{-\frac{\sqrt{m_3}}{2\ln2}} \right)\\
        &\geq (R-o(1))(1-t2^{-\frac{\sqrt{m_3}}{2\ln2}} ),\\
    \end{align*}
    where the first inequality follows from the fact that $\binom{n}{\leq d}\leq 2^{h(\frac{d}{n})n}$ for all integers $n$ and $d\leq \frac{n}{2}$, the second inequality follows from the fact that $h\left(\frac{1-\mu}{2}\right)\leq 1-\frac{\mu^2}{2\ln2}$ for any $\mu\in(0,1)$, and the third inequality follows from the fact that $(1+x)^r\geq 1+rx$ for all $x\geq -1$ and $r\geq1.$ Since $t\leq\sqrt{\log n}$ and $m_3=\Omega(\frac{\log n}{t})$, our code $\mathsf{TRM}(r_1,m_1;\dotsc;r_t,m_t)$ indeed has rate $R\pm o(1).$ It also has length $2^{m_1+m_2}\cdot 2^{
    (t-2)\lceil\frac{\log n-m_1-m_2}{t-2}\rceil}=n^{1+o(1)}.$

    The decoder $D$ we use for decoding our code is given in Algorithm \ref{algotrm}. We represent each codeword of $\mathsf{TRM}(r_1,m_1;\dotsc;r_t,m_t)$ as a $2^{m_1}\times\dotsc\times 2^{m_t}$ Boolean tensor. For any tensor $A\in \{0,1\}^{2^{m_1}\times \dotsc\times2^{m_t} }$, we call any vector along the first axis of $A$ a "row" of $A$ and call any vector along the second axis of $A$ a "column" of $A$. 
\begin{figure}[ht]
\centering
\begin{minipage}{0.9\linewidth}
\begin{algorithm}[H]\label{algotrm}\caption{An efficient decoder for $t$-tensor Reed-Muller codes}
\KwIn{A $2^{m_1}\times \dotsc\times 2^{m_t}$ Boolean tensor $A$.}
\KwOut{
A $2^{m_1}\times \dotsc\times 2^{m_t}$ Boolean tensor. 
}

    Create a look-up table $D_1:\{0,1\}^{2^{m_1}}\rightarrow\mathsf{RM}(r_1,m_1).$
    
    \For{each vector $s\in\{0,1\}^{2^{m_1}}$}{Use brute-force search to find the $c\in\mathsf{RM}(r_1,m_1)$ closest to $s.$ Set $D_1[s]=c.$
    }
    
	\For{each row $u$ of $A$\label{rowforloop}}{Replace $u$ by $D_1[u].$\label{insiderowforloop}}\label{endrowforloop}
    
    $\mathsf{counter} \leftarrow 0.$
    
    \For{each column $v$ of our updated tensor $A$\label{columnforloop}}{{Use Lemma \ref{rmtester} to check if $v\in\mathsf{RM}(r_2,m_2)$. If not, increase $\mathsf{counter}$ by $1$ and replace the column $v$ by the codeword $D_2(v)\in\mathsf{RM}(r_2,m_2)$, where $D_2$ is the decoder from Lemma \ref{combinedecoders} for the code $\mathsf{RM}(r_2,m_2)$. If $\mathsf{counter}>n2^{-2^{(\log\log n)^{1/4}}}$, abort the entire algorithm and return the $\vec{0}$ codeword.\label{linecountert}}}\label{endcolumnforloop}
 {If $t=3$, output the updated tensor $A.$ If $t>3$, run Algorithm \ref{algotensor} on $A$ and return the output 
 (if Algorithm \ref{algotensor} outputs a tensor filled with erasure symbols, return the 
 $\vec{0}$ codeword).\label{lastline}}
\end{algorithm}
\end{minipage}
\end{figure}
We first bound the probability that Algorithm \ref{algotrm} makes a decoding mistake. Note that the only way a decoding mistake can occur is if either:
\begin{enumerate}[(i)]
\item Our algorithm would fail even if the $\mathsf{counter}$ condition (last sentence of line \ref{linecountert}) was disregarded.\label{disregardcounter}
    \item The $\mathsf{counter}$ eventually exceeds $n2^{-2^{(\log\log n)^{1/4}}}$.\label{bigcounter}
\end{enumerate}
We first show that point (\ref{disregardcounter}) is very unlikely to occur. Note that by Theorem \ref{errorcapacity}, the lookup table $D_{1}$ used in the row-for loop of our Algorithm \ref{algotrm} (line \ref{insiderowforloop}) satisfies that for any $c_1\in \mathsf{RM}(r_1,m_1)$,
\begin{align}\label{probx}
     \Pr_{z\sim p}\Big[ D_{1}[c_{1}+z]\neq c_1 \Big]&\leq 2^{-2^{\Omega(\sqrt{m_1})}}\nonumber\\
     &\leq
     2^{-2^{\Omega(\sqrt{\log\log n})}}.
\end{align}
Thus at the end of the row-for loop, every entry of $A$ has probability $\varepsilon\leq 2^{-2^{\Omega(\sqrt{\log\log n})}}$ of being different from the corresponding entry of the sent codeword $B$. Furthermore, since $D_1$ was applied independently to every row of $A$, any coordinates of $A$ that are not in the same row have uncorrelated probabilities of being correct. 

In particular, at the end of the row-for loop, the entries of any given column of $A$ have uncorrelated probabilities of being incorrect. Now, since $\varepsilon\leq\frac{2^{-m_2}}{5}$ for all $n$ large enough, we get from Lemma \ref{combinedecoders} that for any $c_2\in\mathsf{RM}(r_2,m_2)$, the decoder $D_2$ used in the column-for loop (line \ref{linecountert}) of our Algorithm \ref{algotrm} satisfies
\begin{align}\label{proby}
    \Pr_{z\sim \varepsilon}\Big[ D_2(c_2+z)\neq c_2 \Big]&\leq O\left(2^{-2^{\frac{m_2-r_2}{2}}}\right)\nonumber\\
    &=
    2^{-2^{2.5\log\log n\pm o(\log\log n)}}\nonumber\\
    &\leq 2^{-\omega\left(\log^2 n\right)}
\end{align}
Thus at the end of our column-for loop, if we disregard the $\mathsf{counter}$ condition when running the algorithm, then every entry of $A_{}$ has probability 
\begin{align}\label{probaftercolloop}
    \varepsilon'\leq n^{-\omega(\log n)}
\end{align}
of being incorrect. Taking a union bound over all coordinates then concludes the analysis of point (\ref{disregardcounter}) for the case $t=3.$ For the case $t>3$, we define for every $k\in[2^{m_3}]\times\dotsc\times [2^{m_t}]$ the Boolean random variable $X_k$ that is $1$ if and only if upon running Algorithm \ref{algotrm} without the $\mathsf{counter}$ condition, at the end of the column-for loop (line \ref{endcolumnforloop}), there exists $(i,j)\in[2^{m_1}]\times [2^{m_2}]$ such that the entry $A_{ijk}$ is incorrect. By (\ref{probaftercolloop}), since $D_1$ was applied independently to every row and $D_2$ was applied independently to every column, 
the random variables $\{X_k\}$ are independent Bernoulli random variables with probability parameter at most $ 2^{m_1+m_2}\varepsilon'=n^{-\omega(\log n)}.$ By the Chernoff bound (Lemma \ref{chernoff}), 
we then have

\begin{align*}
    \Pr\left[ \sum_{k}X_k\geq {\frac{d_{\min}\big(\mathsf{TRM}(r_1,m_1;\dotsc;r_t,m_t)\big)}{2^{m_1+m_2+1}d_{\min}\big(\mathsf{RM}(r_3,m_3)\big)}}\right]
    &\leq 
    2^{-\frac{d_{\min}\big(\mathsf{TRM}(r_1,m_1;\dotsc;r_t,m_t)\big)}{d_{\min}\big(\mathsf{RM}(r_3,m_3)\big)}\cdot n^{-o(1)}}\nonumber\\
    &\leq 
    2^{-n^{\frac{t-3}{2(t-2)}-o(1)}}.
\end{align*}
Thus if we disregard the $\mathsf{counter}$ condition when running the algorithm, we get that with probability $1-2^{-n^{\frac{t-3}{2(t-2)}-o(1)}}$, there are $\leq\frac{d_{\min}\big(\mathsf{TRM}(r_1,m_1;\dotsc;r_t,m_t)\big)}{2d_{\min}\big(\mathsf{RM}(r_3,m_3)\big)}$ errors remaining after the line-\ref{columnforloop} for loop. By Theorem \ref{thmadversarial}, line \ref{lastline} will then succeed with probability at least $1-2^{-n^{\frac{1}{2}-\frac{1}{2(t-2)}-o(1)}}.$ This concludes our analysis for point (\ref{disregardcounter}). We then turn to showing that point (\ref{bigcounter}) is very unlikely to occur.
Note that by (\ref{probx}), at the end of the row-for loop (line \ref{endrowforloop}) of our algorithm, for any $k\in[2^{m_3}]\times\dotsc\times [2^{m_t}]$ we have
\begin{align*}
    \Pr\Big[\exists (i,j)\in[2^{m_1}]\times[2^{m_2}]\textnormal{ such that }A_{i,j,k}\textnormal{ is incorrect}\Big]&\leq 2^{m_2}\cdot 2^{-2^{\Omega(\sqrt{\log\log n})}}\\
    &\leq 2^{-2^{\Omega(\sqrt{\log\log n})}}.
\end{align*}
Since Algorithm \ref{algotrm} processes each $k\in[2^{m_3}]\times\dotsc\times [2^{m_t}]$ independently up to the end of the column-for loop (line \ref{endcolumnforloop}), we get
\begin{align*}
    \Pr\left[\mathsf{counter}\textnormal{ exceeds }n2^{-2^{(\log\log n)^{1/4}}}\right]&\leq \Pr\left[\sum_{k=1}^{n2^{-m_1-m_2}}2^{m_1}Y_k\geq n2^{-2^{(\log\log n)^{1/4}}} \right]
\end{align*}
for $\{Y_k\}$ independent Bernoulli variables of probability $2^{-2^{\Omega(\sqrt{\log\log n})}}$. By the Chernoff bound (Lemma \ref{chernoff}), we then have
\begin{align*}
    \Pr\left[\mathsf{counter}\textnormal{ exceeds }n2^{-2^{(\log\log n)^{1/4}}}\right]&\leq 2^{-\Omega\left(n2^{-2^{(\log\log n)^{1/4}}}/2^{m_1}\right)}\leq O\left(2^{-\sqrt{n}}\right).
\end{align*}
Combining this bound for point (\ref{bigcounter}) with our previously established bound for point (\ref{disregardcounter}), we get
\begin{align*}
    \Pr_{z\sim p}\Big[ D(c+z)\neq c \Big]&\leq n^{-\omega(\log n)}+O\left(2^{-\sqrt{n}}\right)
\end{align*}
for the case $t=3$ and
\begin{align*}
    \Pr_{z\sim p}\Big[ D(c+z)\neq c \Big]&\leq 2^{-n^{\frac{1}{2}-\frac{1}{2(t-2)}-o(1)}}+O\left(2^{-\sqrt{n}}\right)
\end{align*}
for the case $t>3$, as desired. We now turn to bounding our algorithm's runtime. Since there are $2^{2^{m_1}}$ vectors in $\{0,1\}^{2^{m_1}}$, creating the look-up table $D_1$ takes time 
\begin{align}\label{maketable}
2^{2^{m_1}}\cdot O(2^{2^{m_1}}2^{m_1})=o(n).
\end{align}
Since there are $\frac{n}{2^{m_1}}$ rows in the tensor $A$ and since looking up a value in the table $D_1$ takes time $O(2^{m_1})$, the row-for loop (line \ref{rowforloop}) then takes time 
\begin{align}\label{runtimebd1}
\frac{n}{2^{m_1}}\cdot O\left(2^{m_1}\right)=O\left(n\right).
\end{align}
For the column-for loop (line \ref{columnforloop}), since there are $\frac{n}{2^{m_2}}$ columns in the tensor $A$, by Lemmas \ref{rmtester} and \ref{combinedecoders} the algorithm takes time
\begin{align}\label{runtimebd2}
    \frac{n}{2^{m_2}}\cdot O(m_22^{m_2})+n2^{-2^{(\log\log n)^{1/4}}}\cdot 2^{O(m_2)}=O\left(n\log{\log n}\right).
\end{align}
Combining equations (\ref{maketable}), (\ref{runtimebd1}) and (\ref{runtimebd2}), we get that our decoder $D$ has total runtime $O\left(n\log{\log n}\right)$ when $t=3.$ When $t>3$, the algorithm additionally has to process line \ref{lastline}, which takes time \begin{align}\label{runtimebd3}
        O\left(n\log n\right)
\end{align}
by Theorem \ref{thmadversarial} and Lemma \ref{rmtester}. This brings the total runtime for the case $t>3$ to $O\left(n\log n\right).$

\end{proof}

\bibliographystyle{alpha}
\bibliography{tensor}

\end{document}